\setlist[enumerate]{leftmargin=.5in}
\setlist[itemize]{leftmargin=.5in}
\newcommand{\Tcal}{\mathcal{T}}
\newcommand{\Mcal}{\mathcal{M}}
\newcommand{\Gcal}{\mathcal{G}}
\newcommand{\vx}{\mathbf{x}}
\newcommand{\ve}{\mathbf{e}}
\newcommand{\vf}{\mathbf{f}}
\newcommand{\vh}{\mathbf{h}}
\newcommand{\vq}{\mathbf{q}}
\newcommand{\vu}{\mathbf{u}}
\newcommand{\vy}{\mathbf{y}}
\newcommand{\vz}{\mathbf{z}}
\newcommand{\vlam}{\pmb{\lambda}}
\newcommand{\vxi}{\pmb{\xi}}
\newcommand{\mI}{\mathbf{I}}
\newcommand{\mC}{\mathbf{C}}
\newcommand{\mB}{\mathbf{B}}
\newcommand{\mA}{\mathbf{A}}
\newcommand{\mH}{\mathbf{H}}
\newcommand{\mR}{\mathbf{R}}
\newcommand{\mS}{\mathbf{S}}
\newcommand{\mL}{\mathbf{L}}
\newcommand{\mPhi}{\pmb{\Phi}}
\newcommand{\mPsi}{\pmb{\Psi}}
\newcommand{\minv}{\left(\mPsi^\intercal \mPhi\right)^{-1}}
\newcommand{\minvT}{\left(\mPhi^\intercal \mPsi\right)^{-1}}
\crefname{hypothesis}{Hypothesis}{Hypotheses}
\title{Data-driven model reduction via non-intrusive optimization of projection operators and reduced-order dynamics}
\author{Alberto Padovan\thanks{Department of Aerospace Engineering, University of Illinois at Urbana-Champaign, Urbana, IL, 61801 (\email{padovan3@illinois.edu})}
\and Blaine Vollmer\footnotemark[1]\and Daniel J. Bodony\footnotemark[1]}
\newcommand*{\addFileDependency}[1]{% argument=file name and extension
  \typeout{(#1)}% latexmk will find this if $recorder=0 (however, in that case, it will ignore #1 if it is a .aux or .pdf file etc and it exists! if it doesn't exist, it will appear in the list of dependents regardless)
  \@addtofilelist{#1}% if you want it to appear in \listfiles, not really necessary and latexmk doesn't use this
  \IfFileExists{#1}{}{\typeout{No file #1.}}% latexmk will find this message if #1 doesn't exist (yet)
}
\begin{document}
\nolinenumbers

\maketitle

% REQUIRED
\begin{abstract} 
Computing reduced-order models using non-intrusive methods is particularly attractive for systems that are simulated using black-box solvers. 
However, obtaining accurate data-driven models can be challenging, especially if the underlying systems exhibit large-amplitude transient growth. 
Although these systems may evolve near a low-dimensional subspace that can be easily identified using standard techniques such as Proper Orthogonal Decomposition (POD), computing accurate models often requires projecting the state onto this subspace via a non-orthogonal projection.
While appropriate oblique projection operators can be computed using intrusive techniques that leverage the form of the underlying governing equations, purely data-driven methods currently tend to achieve dimensionality reduction via orthogonal projections, and this can lead to models with poor predictive accuracy.
In this paper, we address this issue by introducing a non-intrusive framework designed to simultaneously identify oblique projection operators and reduced-order dynamics. 
In particular, given training trajectories and assuming reduced-order dynamics of polynomial form, we fit a reduced-order model by solving an optimization problem over the product manifold of a Grassmann manifold, a Stiefel manifold, and several linear spaces (as many as the tensors that define the low-order dynamics). 
Furthermore, we show that the gradient of the cost function with respect to the optimization parameters can be conveniently written in closed-form, so that there is no need for automatic differentiation. 
We compare our formulation with state-of-the-art methods on three examples: a three-dimensional system of ordinary differential equations, the complex Ginzburg-Landau (CGL) equation, and a two-dimensional lid-driven cavity flow at Reynolds number $Re = 8300$.
% We compare our framework with existing non-intrusive (or weakly intrusive) model reduction techniques such as Operator Inference and POD-Galerkin, and also with the intrusive Trajectory-based Optimization of Oblique Projections (TrOOP) approach.
% On all three examples, we compare the predictive accuracy of our models against state-of-the-art non-intrusive (or weakly intrusive) formulations such as Operator Inference and POD-Galerkin. 
% On the toy model and the CGL equation, we also compare against the highly intrusive, but very accurate, Trajectory-based Optimization of Oblique Projections (TrOOP) framework.
\end{abstract}

% REQUIRED
\begin{keywords}
  Model reduction, Data-driven reduced-order models, Manifold optimization, Operator inference.
\end{keywords}

% REQUIRED
\begin{AMS}
  68Q25, 68R10, 68U05
\end{AMS}

\section{Introduction}

Computing reduced-order models of high-dimensional systems is often necessary to perform several tasks, including accelerating expensive simulations, developing control strategies and solving design optimization problems. 
Most model reduction frameworks share the following key ingredients: a possibly nonlinear map from the high-dimensional state space to a low-dimensional space (i.e., an encoder), a possibly nonlinear map from the low-dimensional space to the original high-dimensional space (i.e., a decoder), and reduced-order dynamics to evolve the reduced-order state. 
Here, we provide a brief review of intrusive and non-intrusive methods where the reduced-order dynamics are continuous in time, and where the encoder and decoder define linear projection operators (i.e., the encoder and decoder are linear maps and the encoder is a left-inverse of the decoder).

Perhaps the most well-known reduced-order models that fall within this category are the so-called linear-projection Petrov-Galerkin models.
These are obtained by (obliquely) projecting the full-order dynamics onto a low-dimensional linear subspace.
In particular, given a decoder $\mPhi \minv$ and an encoder $\mPsi^\intercal$, where $\mPhi$ and $\mPsi$ are tall rectangular matrices that define a projection $\mathbb{P} = \mPhi\minv \mPsi^\intercal$, the aforementioned linear subspace is given by the span of $\mPhi$, and $\mPsi$ specifies the direction of projection.
% maps a high-dimensional vector onto the span of $\mPhi$ along the direction given by the orthogonal complement of the span of $\mPsi$. 
This is neatly illustrated in figure 1 in \cite{Rowley2017model}. 
% In particular, this subspace is given by the span of the decoder $\mPhi\minv$, and the encoder 
% and they are uniquely defined by the orthogonal complement of the nullspace of the encoder $\mPsi^\intercal$ and by the span of the decoder $\mPhi\minv$, where $\mPhi$ and $\mPsi$ are tall rectangular matrices that define an oblique projection operator $\mathbb{P} = \mPhi\minv \mPsi^\intercal$.
If $\mPhi = \mPsi$, then the projection $\mathbb{P}$ is orthogonal and the model is known as a Galerkin model. 
In the simplest of cases, a Galerkin model can be obtained by orthogonally projecting the dynamics onto the span of the leading Proper Orthogonal Decomposition (POD) modes of a representative training data set. 
This procedure is ``weakly'' intrusive in the sense that it requires access to the governing equations, but not necessarily to the linearization and adjoint of the underlying nonlinear dynamics.
In the context of fluids, POD-Galerkin models have been used extensively for both compressible and incompressible flows \cite{Rowley2004model,Noack2003hierarchy,barone2009,Rowley2017model}.
However, these models may not perform well in systems that exhibit large-amplitude transient growth. 
Examples of such systems in fluid mechanics include boundary layers, mixing layers, jets and high-shear flows in general \cite{chomaz2005}. 
The difficulty posed by these systems can often be traced back to the non-normal nature of the underlying linear dynamics, which demands the use of carefully chosen oblique projections. 
In linear systems, or nonlinear systems that evolve near a steady state, this problem can be addressed using methods such as Balanced Truncation \cite{Moore1981principal,Dullerud2000robust,willcox2002} or Balanced POD \cite{Rowley2005model}, which produce oblique projection operators and corresponding Petrov-Galerkin models by balancing the observability and reachability Gramians associated with the underlying linear dynamics. 
Extensions and variants of Balanced Truncation and Balanced POD also exist for quadratic-bilinear systems \cite{Benner2017balanced} and for systems that evolve in the proximity of time-periodic orbits \cite{varga2000,Ma2010,padovan2023}.
For highly nonlinear systems that lie outside the region of applicability of Balanced Truncation or Balanced POD, one can turn to recently-developed methods such as Trajectory-based Optimization of Oblique Projections (TrOOP) \cite{Otto2022optimizing} and  Covariance
Balancing Reduction using Adjoint Snapshots (COBRAS) \cite{Otto2023cobras}.
TrOOP identifies optimal oblique projections for Petrov-Galerkin modelling by training against trajectories generated by the full-order model, while COBRAS identifies oblique projections for model reduction by balancing the state and gradient covariances associated with the full-order solution map. 
We shall see that our non-intrusive formulation is closely related to TrOOP, so we will discuss the latter in more detail in section \ref{subsec:connection}.
All these Petrov-Galerkin methods are intrusive:
not only do they require access to the full-order dynamics, but also to their linearization about steady or time-varying base flows and to the adjoint of the linearized dynamics. 
Thus, they are not easily applicable to systems that are simulated using black-box solvers. 

Among existing techinques to obtain data-driven reduced-order models with continuous-time dynamics on linear subspaces, the most well-known is perhaps Operator Inference \cite{Peherstorfer2016,kramer2024}. 
Operator Inference fits a model to data by minimizing the difference between (usually polynomial) reduced-order dynamics and the projection of the time-derivative of the full-order state onto a low-dimensional subspace. 
Usually, this subspace is defined by the span of POD modes, and the high-dimensional data are projected orthogonally onto it.
While Operator Inference has been shown to work well for systems that evolve in close proximity of an attractor (see, e.g., \cite{quian2022}), it may suffer from the aforementioned drawbacks of orthogonal projections when applied to highly non-normal systems evolving far away from an attractor (e.g., during transients).
This will become apparent in the examples sections.
There exist several other non-intrusive model reduction frameworks in the literature (e.g., discrete-time formulations, autoencoders parameterized via neural networks, and many others), and we will mention those that are more closely connected with our formulation as needed throughout the manuscript.

In this paper, we introduce a novel non-intrusive framework to address the problems associated with orthogonal projections.
In particular, given training trajectories from the full-order model, we fit an optimal low-order model by simultaneously seeking reduced-order dynamics $\vf_r$ and oblique projection operators $\mathbb{P}$ defined by a linear encoder $\mPsi^\intercal$ and a linear decoder $\mPhi\minv$.
We shall see that the optimization parameters are the subspace $V = \text{Range}(\mPhi)$, which lives naturally on the Grassmann manifold, the matrix $\mPsi$, which can be taken to be an element of the orthogonal Stiefel manifold, and the parameters that define the reduced-order dynamics (e.g., reduced-order tensors if the dynamics are taken to be polynomial).
% We shall see that $\mPsi$ can be taken to be an element of the (orthogonal) Stiefel manifold, and that our reduced-order models are functions of the subspace $V = \text{Range}(\mPhi)$, which is an element of the Grassmann manifold.
Furthermore, if we constrain the reduced-order dynamics $\vf_r$ to be of a form  that lends itself to straightforward differentiation (e.g., polynomial), we show that the gradient of the cost function with respect to the optimization parameters can be written in closed-form. 
This is quite convenient because it bypasses the need for automatic differentiation and it allows for faster training. 
We test our formulation on three different examples: a simple system governed by three ordinary differential equations, the complex Ginzburg-Landau (CGL) equation and the two-dimensional incompressible lid-driven cavity flow at Reynolds number $Re = 8300$.
On all three examples, we compare our framework with Operator Inference and POD-Galerkin. 
In the first two examples, we also compare with TrOOP, which has been shown to give very accurate Petrov-Galerkin models in several examples, including highly non-normal and nonlinear jets \cite{Otto2022optimizing,otto2023}.
On all three examples, our models exhibit better performance than Operator Inference and POD-Galerkin models, and in the first two examples we obtain models with predictive accuracy very close to that of the intrusive TrOOP formulation.

\section{Mathematical formulation}
\label{sec:math_form}
Throughout this section, we consider a general nonlinear system with dynamics defined by
\begin{equation}
\label{eq:fom}
    \begin{aligned}
        \frac{\mathrm{d}\vx}{\mathrm{d} t} &= \vf(\vx,\vu),\quad \vx(0) = \vx_0 \\
        \vy &= \vh(\vx)
    \end{aligned}
\end{equation}
where~$\vx \in \mathbb{R}^n$ is the state vector, $\vx_0$ is the initial condition, $\vu \in\mathbb{R}^m$ is the control input and $\vy \in \mathbb{R}^p$ is the measured output.
Since our model reduction procedure draws inspiration from the form of Petrov-Galerkin reduced-order models, we begin by providing a brief review of the latter.
We then introduce our framework in section \ref{subsec:non_intrusive_models}.
% Before introducing our model reduction procedure, it is instructive to provide a brief review of Petrov-Galerkin (or projection-based) models. 

\subsection{Petrov-Galerkin models}
\label{subsec:petrov_galerkin_models}

As discussed in the introduction, Petrov-Galerkin reduced-order models are a class of models obtained by constraining the full-order dynamics in \eqref{eq:fom} to a linear subspace of $\mathbb{R}^n$. 
While Petrov-Galerkin models can also be obtained via nonlinear projection onto curved manifolds \cite{otto2023}, here we constrain our attention to the more common case of linear projections. 
% \bvremark{does this linear projection restriction impact any of our developments later? i.e. would using nonlinear potentially negate any benefits of the new method?}
Given rank-$r$ matrices $\mPhi \in \mathbb{R}^{n\times r}$ and $\mPsi\in\mathbb{R}^{n\times r}$ that define an \emph{oblique} projection $\mathbb{P} = \mPhi\minv \mPsi^\intercal$, the corresponding Petrov-Galerkin model for~\eqref{eq:fom} is given by 
\begin{equation}
    \label{eq:rom}
    \begin{aligned}
        \frac{\mathrm{d} \hat{\vx}}{\mathrm{d} t} &= \mathbb{P}\vf\left(\mathbb{P}\hat{\vx},\vu\right),\quad \hat{\vx}(0) = \mathbb{P}\vx_0 \\
        \hat{\vy} &= \vh\left(\mathbb{P}\hat{\vx}\right),
    \end{aligned}
\end{equation}
where $\hat{\vx}$ lies in the range of $\mathbb{P}$ for all times. 
In the special case of $\mPsi = \mPhi$, the projection $\mathbb{P}$ is orthogonal and the model \eqref{eq:rom} is referred to as a Galerkin model.
While the state $\hat{\vx}\in \mathbb{R}^n$ is an $n$-dimensional vector (i.e., the same size of the original state $\vx$), the dynamics \eqref{eq:rom} can be realized by an equivalent $r$-dimensional system
\begin{equation}
\label{eq:rom_z}
    \begin{aligned}
        \frac{\mathrm{d} \hat{\vz}}{\mathrm{d} t} &= \mPsi^\intercal \vf\left(\mPhi\minv \hat{\vz},\vu\right),\quad \hat{\vz}(0) = \mPsi^\intercal \vx_0 \\
        \hat{\vy} &= \vh\left(\mPhi\minv\hat{\vz}\right)
    \end{aligned}
\end{equation}
where the state vector $\hat{\vz}= \mPsi^\intercal \hat{\vx}$ has dimension $r$. 
The primary challenge associated with computing accurate projection-based reduced-order models lies in identifying matrices~$\mPhi$ and~$\mPsi$ that define appropriate projections $\mathbb{P}$. 
While there exist several methods to address this challenge, these are often intrusive in the sense that they require access to the linearization of \eqref{eq:fom} and its adjoint \cite{Rowley2005model,Otto2022optimizing,Otto2023cobras}. 
In the next section, we present a non-intrusive model reduction formulation by allowing for the reduced-order dynamics to be independent of the full-order right-hand side $\vf$.

\subsection{Non-intrusive optimization of projection operators and reduced-order dynamics}
\label{subsec:non_intrusive_models}

Here, we consider reduced-order models of the form 
\begin{equation}
\label{eq:rom_ni}
    G(\mPhi,\mPsi,\hat\vf_r) = 
    \begin{dcases}
        \begin{aligned}
            \frac{\mathrm{d} \hat{\vz}}{\mathrm{d} t} &=  \vf_r\left(\hat{\vz},\vu\right),\quad \hat{\vz}(0) = \mPsi^\intercal \vx_0 \\
            \hat{\vy} &= \vh\left(\mPhi\minv\hat{\vz}\right)
        \end{aligned}
    \end{dcases}
\end{equation}
It is instructive to observe that if $\vf_r(\hat{\vz},\vu) =\mPsi^\intercal \vf\left(\mPhi\minv \hat{\vz},\vu\right)$ then \eqref{eq:rom_ni} is the exact analog of the Petrov-Galerkin reduced-order model in \eqref{eq:rom_z}.
Instead, we let $\vf_r$ be a general function of the reduced-order state $\hat{\vz}$ and of the input $\vu$.
% While this implies that \eqref{eq:rom_ni} is no longer a projection-based model, we shall see shortly that this is what allows us to make the framework non-intrusive.
So, while Petrov-Galerkin models are fully defined by (the span of) the matrices $\mPhi$ and $\mPsi$ that define a projection onto a low-dimensional subspace, here we have additional degrees of freedom in the choice of the reduced-order dynamics. 
We shall see momentarily that this additional freedom allows us to proceed non-intrusively.

% We now work towards writing an optimization problem that will deliver an optimal reduced-order model of the form of \eqref{eq:rom_ni}. 
Within our framework, we seek reduced-order models of the form of \eqref{eq:rom_ni} by minimizing the error between ground-truth observations $\vy$ coming from \eqref{eq:fom} and the predicted observations $\hat\vy$ given by \eqref{eq:rom_ni}. 
In order to convert this task into an appropriate optimization problem, it is useful to first identify the symmetries and constraints that are present in~\eqref{eq:rom_ni}.
We begin by observing that the system~$G$ in~\eqref{eq:rom_ni} is invariant with respect to a rotation and scaling of the basis matrix $\mPhi$. 
In fact, $G(\mPhi \mR,\mPsi,\vf_r) = G(\mPhi,\mPsi,\vf_r)$ for any invertible matrix~$\mR$ of size $r\times r$.
It follows that the reduced-order system defined by~\eqref{eq:rom_ni} is a function of the $r$-dimensional subspace $V = \text{Range}(\mPhi)$, rather than of the matrix representative $\mPhi$ itself. 
In the mathematical statement of the problem we will make use of this symmetry and leverage the fact that $r$-dimensional subspaces of $\mathbb{R}^n$ are elements of the Grassmann manifold $\mathcal{G}_{n,r}$.
An analogous type of symmetry does \emph{not} hold for $\mPsi$. 
In fact, it can be easily verified that there exist invertible matrices $\mS$ such that $G(\mPhi,\mPsi\mS,\vf_r) \neq G(\mPhi,\mPsi,\vf_r)$. 
While \eqref{eq:rom_ni} does not enjoy any $\mPsi$-symmetries, we still require $\mPsi$ to have full column rank (otherwise the product $\mPsi^\intercal\mPhi$ would be rank deficient).
It is therefore natural to constrain $\mPsi$ to the Stiefel manifold $S_{n,r}$ of orthonormal (and, hence, full-rank) $n\times r$ matrices.
% Finally, the parameters $\mF$ that define the action of $\vf_r$ on the state $\hat\vz$ and input $\vu$ live naturally on a linear manifold $\Ecal$.
% For instance, if $\vf_r(\hat\vz,\vu) = \mA_r \hat\vz$, then $\mF = \text{Vec}(\mA_r)$ (where $\text{Vec}$ denotes the vectorization of a tensor) and $\Ecal = \mathbb{R}^{r^2}$.
Finally, in order to write an optimization problem where the gradient of the cost function with respect to all the parameters can be obtained in closed-form, it is convenient to constrain the reduced-order dynamics $\vf_r$ to a form that lends itself to straightforward differentiation. 
Throughout this paper, we will let~$\vf_r$ be a polynomial function of the reduced-order state $\hat{\vz}$ and of the input $\vu$ as follows
\begin{equation}
\label{eq:fr}
    \vf_r = \underbrace{\mA_r \hat\vz + \mB_r\vu +\mH_r:\hat\vz\hat\vz^\intercal}_{\coloneqq \overline{\vf}_r}  + \mL_r : \hat\vz \vu^\intercal + \ldots.
\end{equation}
Here, capital letters denote reduced-order tensors that lie naturally on linear manifolds of appropriate dimension (e.g., $\mA_r \in \mathbb{R}^{r\times r}$, $\mB_r \in \mathbb{R}^{r\times m}$ and $\mH\in\mathbb{R}^{r\times r\times r}$).
In the interest of a more concise description of the mathematical formulation, we take~$\vf_r = \overline\vf_r$.
This assumption can be trivially relaxed.

We are now ready to state the optimization problem that will give us an optimal reduced-order model of the form of \eqref{eq:rom_ni}.
Given outputs $\vy(t_i)$ sampled at times $t_i$ along a trajectory generated from the full-order system \eqref{eq:fom}, we seek a solution to 
\begin{equation}
\label{eq:opt_problem}
    \begin{aligned}
        \min_{\left(V,\mPsi,\mA_r,\mH_r,\mB_r\right) \in \mathcal{M}}\,\,\, J &=  \sum_{i=0}^{N-1}\lVert \vy(t_i) - \hat{\vy}(t_i)\rVert^2 \\
        \text{subject to: } \quad \frac{d \hat{\vz}}{d t} &=  \overline\vf_r(\hat\vz,\vu),\quad \hat{\vz}(t_0) = \mPsi^\intercal \vx(t_0) \\
            \hat{\vy} &= \vh\left(\mPhi\minv\hat{\vz}\right) \\
            V &= \text{Range}\left(\mPhi\right)
    \end{aligned}
\end{equation}
where $\mathcal{M} = \mathcal{G}_{n,r}\times S_{n,r}\times \mathbb{R}^{r\times r}\times \mathbb{R}^{r\times r\times r} \times \mathbb{R}^{r\times m}$ is the product manifold that defines our optimization domain. 
% \bvremark{where do the symmetry and other restrictions on $\Psi$ and $\Phi$ come in? I probably just missed the point?}

% For the sake of completeness, we observe that higher-order reduced-order dynamics can be obtained by appending the appropriate higher-order tensors to the parameters tuple. 

\subsection{Gradient-based optimization on $\Mcal$}

In order to solve the optimization problem \eqref{eq:opt_problem} using a gradient-based algorithm, it is necessary to endow the manifold $\mathcal{M}$ with a Riemannian metric $g^\mathcal{M}$. 
This is a smooth family of inner products $g^{\mathcal{M}}_p$ defined on the tangent spaces of the manifold $\mathcal{M}$,
\begin{equation}
    g^\mathcal{M}_p : \mathcal{T}_p\mathcal{M}\times \mathcal{T}_{p}\mathcal{M} \to \mathbb{R},
\end{equation}
where $\mathcal{T}_p\mathcal{M}$ denotes the tangent space of $\mathcal{M}$ at a point $p\in\Mcal$ \cite{Absil2009optimization}. 
The gradient $\xi$ of the cost function at $p\in \mathcal{M}$ is then defined as the element of the tangent space $\Tcal_p \Mcal$ that satisfies 
\begin{equation}
    D J[\eta] = g_p^{\Mcal}\left(\xi,\eta\right),\quad \forall \eta \in \Tcal_p^\Mcal,
\end{equation}
where $D J[\eta]$ is the directional derivative.
To define a metric for the product manifold $\mathcal{M}$, we first need to define metrics for each individual component of the Cartesian product that defines $\mathcal{M}$. 

We begin with the Stiefel manifold. 
Since $S_{n,r}$ can be viewed as an embedded submanifold of $\mathbb{R}^{n\times r}$, it is natural to endow $S_{n,r}$ with the metric
\begin{equation}
    g^{S_{n,r}}_{\mPsi}\left(\xi,\eta\right) = \text{Tr}\left(\xi^\intercal \eta\right),\quad \xi,\,\eta \in \mathcal{T}_{\mPsi}S_{n,r},
\end{equation}
which is the Euclidean metric inherited from the ambient space $\mathbb{R}^{n\times r}$ \cite{Absil2009optimization} and $\text{Tr}$ denotes the trace. 
A metric for the Grassmann manifold can be defined analogously, albeit paying attention to the fact that the Grassmannian is an abstract manifold with non-unique matrix representatives.
If we view $\Gcal_{n,r}$ as a quotient manifold of the non-orthogonal Stiefel manifold $\mathbb{R}^{n\times r}_*$ (which is the manifold of rank-$r$, but not necessarily orthonormal, matrices of size $n\times r$), we can define a metric on the abstract manifold $\Gcal_{n,r}$ in terms of matrix-valued objects defined in the ambient space $\mathbb{R}^{n\times r}_{*}$. 
This is convenient since elements of the ambient space can be easily represented on a computer. 
Thus, given the ambient space metric 
\begin{equation}
\label{eq:Phi_metric}
    g^{\mathbb{R}_{*}^{n\times r}}_{\mPhi}\left(\xi,\eta\right) = \text{Tr}\left(\left(\mPhi^\intercal \mPhi\right)^{-1}\xi^\intercal \eta \right),\quad \xi,\, \eta \in \Tcal_{\mPhi} \mathbb{R}^{n\times r}_{*}
\end{equation}
we let the metric on $\Gcal_{n,r}$ be defined as
\begin{equation}
\label{eq:metric_grass}
    g^{\Gcal_{n,r}}_{V}(\xi,\eta) = g^{\mathbb{R}_{*}^{n\times r}}_{\mPhi}\left(\overline{\xi}_{\mPhi},\overline{\eta}_{\mPhi}\right),\quad \xi,\,\eta \in \Tcal_V,\quad \text{Range}\left(\mPhi\right) = V.
\end{equation}
It is worth observing that \eqref{eq:metric_grass} is not yet suited for computation, since there exists an infinite number of elements $\overline{\xi}_{\mPhi}$ and $\overline{\eta}_{\mPhi}$ of $\Tcal_{\mPhi} \mathbb{R}^{n\times r}_{*}$ that satisfy the equality.
The ambiguity is removed by requiring $\overline{\xi}_{\mPhi}$ and $\overline{\eta}_{\mPhi}$ to lie on the \emph{horizontal space}, which is a subspace of $\Tcal_{\mPhi} \mathbb{R}^{n\times r}_{*}$ where one may identify unique $\overline{\xi}_{\mPhi}$ and $\overline{\eta}_{\mPhi}$ that satisfy \eqref{eq:metric_grass}.
A rigorous characterization of the horizontal space is provided in chapter 3 of \cite{Absil2009optimization}, and the specific case of the Grassmann manifold is considered in example 3.6.4 in the same reference. 
Finally, for the linear manifolds in the Cartesian product of $\Mcal$, we adopt the Euclidean metric (i.e., the usual tensor dot product).

Another ingredient that is necessary for gradient-based manifold optimization is the concept of a \emph{retraction}. 
This is a map $R_p: \Tcal_p \Mcal \to \Mcal$ that satisfies $R_p(0) = p$ and $D R_p(0) = I_{\Tcal_p\Mcal}$, where $I_{\Tcal_p\Mcal}$ is the identity map on the tangent space $\Tcal_p\Mcal$ \cite{Absil2009optimization}.
The use of this map allows us to generalize the concept of moving in the direction of the gradient on a nonlinear manifold: for instance, given a point $p \in \Mcal$ and the gradient $\xi \in \Tcal_p\Mcal$ of a function $f$ defined on $\Mcal$, the next iterate in the direction of the gradient is given by $R_p(p - \alpha \xi) \in \Mcal$, where~$\alpha$ is some learning rate.
In other words, the retraction allows us to guarantee that all iterates generated by a gradient flow lie on the manifold. 
Valid retractions for both the Stiefel and Grassmann manifolds are given by the QR decomposition (see examples 4.1.3 and 4.1.5 in \cite{Absil2009optimization}), while for linear manifolds the retraction is simply the identity map. 
Lastly, we point out that second-order gradient-based algorithms (e.g., conjugate gradient) require the concept of \emph{vector transport}.
This is thoroughly described in section 8.1 of \cite{Absil2009optimization}. 

% \bvremark{Consider putting a reference for all the math to help readers unfamiliar with all the jargon (like me) at the beginning of the section?}

Gradient-based algorithms on nonlinear manifolds are well-understood and readily available in libraries such as \texttt{Pymanopt} \cite{pymanopt} in Python or \texttt{Manopt} \cite{manopt2014} in MATLAB.
Metrics, retractions and vector transports are conveniently handled by these packages, and a user simply needs to provide routines to evaluate the cost function and the Euclidean gradient. 
% that only require the definition of the cost function and of the Euclidean gradient from the user. 
Conveniently, our model reduction formulation allows for the Euclidean gradient of the cost function in \eqref{eq:cost_function} to be computed in closed form. 
This result is stated in the following proposition, where we also see that the computation of the gradient does not require querying the full-order model \eqref{eq:fom}. 
In other words, the gradient can be computed non-intrusively. 
% a closed-form equation for the Euclidean gradient of the cost function $J$ in \eqref{eq:opt_problem} with respect to the problem parameters.

\begin{proposition}
\label{prop:gradient}
    Let \eqref{eq:opt_problem} be written as an equivalent unconstrained optimization problem where we seek a minimum to the Lagrangian
    \begin{equation}
    \label{eq:lagrangian}
        \begin{aligned}
            L =  \sum_{i=0}^{N-1}&\bigg\{\big\lVert \vy(t_i) - \vh\left(\mPhi\minv\hat{\vz}(t_i)\right)\big\rVert^2  + \int_{t_0}^{t_i} \vlam_i^\intercal\left(\frac{d \hat{\vz}}{d t} -  \mA_r \hat\vz -\mH_r:\hat\vz\hat\vz^\intercal - \mB_r\vu\right)d t\bigg. \\
            &\bigg. + \vlam_i(t_0)^\intercal \left(\hat\vz(t_0) - \mPsi^\intercal \vx(t_0)\right)\bigg\}
        \end{aligned}
    \end{equation}
    Here, $\vlam_i(t) \in\mathbb{R}^r$ with $t\in [t_0,t_i]$ is the $i$th Lagrange multiplier.
    Defining $\ve(t_i) \coloneqq \vy(t_i) - \vh\left(\mPhi\minv\hat{\vz}(t_i)\right)$ and $\mC_{j,k} \coloneqq \partial \vh_j/\partial \vx_k$, the Euclidean gradients of the Lagrangian with respect to each of the parameters are given below,
    \begin{align}
        \nabla_{\mPhi} L &= -2 \sum_{i=0}^{N-1}\left(\mI 
 - \mPsi \minvT \mPhi^\intercal\right)\mC(t_i)^\intercal \ve(t_i)\hat{\vz}(t_i)^\intercal \minv \label{eq:grad_Phi}\\
        \nabla_{\mPsi} L &= \sum_{i=0}^{N-1}\left(2 \mPhi \minv \hat\vz(t_i)\ve(t_i)^\intercal \mC(t_i)^\intercal \mPhi \minv - \vx(t_0)\vlam_i(t_0)^\intercal\right) \\
        \nabla_{\mA_r} L &= -\sum_{i=0}^{N-1}\int_{t_0}^{t_i}\vlam_i \hat\vz ^\intercal\,dt \\
        \nabla_{\mH_r} L &= -\sum_{i=0}^{N-1}\int_{t_0}^{t_i}\vlam_i \otimes \left(\hat\vz \hat\vz^\intercal\right)\,dt\\
        \nabla_{\mB_r} L &= -\sum_{i=0}^{N-1}\int_{t_0}^{t_i}\vlam_i \vu^\intercal\,dt,
    \end{align}
    where the Lagrange multiplier $\vlam_i(t)$ satisfies the reduced-order adjoint equation
    \begin{equation}
    \label{eq:adjoint_rom}
        -\frac{d \vlam_i}{dt} = \left[\partial_{\hat\vz}\overline{\vf}_r(\hat\vz)\right]^\intercal \vlam_i,\quad \vlam_i(t_i) = 2 \minvT \mPhi^\intercal \mC(t_i)^\intercal \ve(t_i),\quad t\in [t_0,t_i].
    \end{equation}
\end{proposition}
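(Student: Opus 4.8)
The plan is to treat this as a standard adjoint-based (Lagrangian) sensitivity computation for an ODE-constrained optimization problem. The key observation is that the multipliers $\vlam_i$ are introduced precisely so that $L = J$ whenever the forward constraints hold, so that on the constraint set the gradient of $L$ coincides with the gradient of the cost. The strategy is then to compute the first variation $\delta L$ with respect to \emph{all} independent quantities --- the state trajectory $\hat{\vz}(\cdot)$ (treated as an independent variable), the multipliers $\vlam_i$, and the five parameters $\mPhi,\mPsi,\mA_r,\mH_r,\mB_r$ --- and to choose $\vlam_i$ so that every term proportional to the (expensive, implicitly defined) state sensitivity $\delta\hat{\vz}$ drops out. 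What remains is an explicit expression in the parameter variations, and reading off the Frobenius/tensor inner products against each $\delta(\cdot)$ yields the Euclidean gradients.

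First I would vary $\vlam_i$, which simply returns the forward ODE and initial condition and confirms they hold at any stationary point. Next I would collect all terms containing $\delta\hat{\vz}$. The only nontrivial manipulation here is the integral $\int_{t_0}^{t_i}\vlam_i^\intercal\,\delta\dot{\hat{\vz}}\,\mathrm{d}t$, which I would integrate by parts to transfer the time derivative onto $\vlam_i$; this produces an interior integral with coefficient $-\dot{\vlam}_i - [\partial_{\hat{\vz}}\overline{\vf}_r]^\intercal\vlam_i$ together with boundary contributions at $t_i$ and $t_0$. Requiring the interior coefficient to vanish for arbitrary $\delta\hat{\vz}$ gives the adjoint ODE in \eqref{eq:adjoint_rom}. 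The terminal condition follows by combining the $t_i$ boundary term $\vlam_i(t_i)^\intercal\delta\hat{\vz}(t_i)$ with the $\delta\hat{\vz}(t_i)$ produced by differentiating the output-mismatch term $\lVert\ve(t_i)\rVert^2$ through the decoder, using $\partial\ve/\partial\hat{\vz} = -\mC\,\mPhi\minv$; and the $t_0$ boundary term $-\vlam_i(t_0)^\intercal\delta\hat{\vz}(t_0)$ cancels identically against the variation of the initial-condition penalty, which is exactly why the same multiplier $\vlam_i(t_0)$ is used in both places. This pins down the adjoint system stated in the proposition.

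With the adjoint satisfied, $\delta L$ reduces to explicit parameter variations, which I would read off one at a time. The gradients with respect to $\mA_r$ and $\mB_r$ are immediate, since these enter the constraint linearly; the identities $\vlam_i^\intercal(\delta\mA_r)\hat{\vz} = \langle\vlam_i\hat{\vz}^\intercal,\delta\mA_r\rangle$ and $\vlam_i^\intercal(\delta\mB_r)\vu = \langle\vlam_i\vu^\intercal,\delta\mB_r\rangle$ give the stated integrals, and the analogous contraction for the quadratic term yields the outer/tensor product $\vlam_i\otimes(\hat{\vz}\hat{\vz}^\intercal)$ for $\mH_r$. The two subtle gradients are those for $\mPhi$ and $\mPsi$, both of which enter nonlinearly through $\minv$ in the decoder. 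Here I would use the inverse-differentiation identity $\delta\minv = -\minv\,\delta(\mPsi^\intercal\mPhi)\,\minv$ together with the product rule. For $\mPhi$ this gives $\delta(\mPhi\minv) = (\mI - \mPhi\minv\mPsi^\intercal)\,\delta\mPhi\,\minv$, whose transpose supplies the oblique-projector factor $(\mI - \mPsi\minvT\mPhi^\intercal)$ in \eqref{eq:grad_Phi}; $\mPsi$ receives a contribution of the same type from the decoder plus the explicit contribution $-\vx(t_0)\vlam_i(t_0)^\intercal$ from the initial-condition penalty, and summing the two gives $\nabla_{\mPsi}L$.

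I expect the main obstacle to be the bookkeeping for $\mPhi$ and $\mPsi$: differentiating $\minv$ correctly and then using cyclic invariance of the trace to reorganize each scalar variation into the canonical form $\langle G,\delta\mPhi\rangle$ or $\langle G,\delta\mPsi\rangle$, keeping every transpose and the placement of $\minv$ versus $\minvT$ straight. Everything else --- the integration by parts, the adjoint derivation, and the linear gradients --- is routine. It is also worth noting that all of these are the \emph{Euclidean} gradients in the ambient matrix and tensor spaces, which is exactly the input required by the manifold-optimization routines discussed above; the Riemannian gradients on $\Mcal$ are then obtained from them via the metrics and horizontal-space projections already specified.
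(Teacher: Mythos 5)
Your proposal is correct and follows essentially the same route as the paper's proof: introduce the Lagrangian, integrate the $\vlam_i^\intercal\,\mathrm{d}\hat{\vz}/\mathrm{d}t$ term by parts to obtain the adjoint ODE and its terminal condition (with the $t_0$ boundary term cancelling against the initial-condition penalty), and then read off the parameter gradients using $D_{\mPhi}\minv[\vxi] = -\minv\mPsi^\intercal\vxi\minv$ and cyclic rearrangement into Frobenius inner products. The only detail worth adding is the paper's remark that the ambient Grassmann metric \eqref{eq:Phi_metric} reduces to the plain trace inner product under the working assumption $\mPhi^\intercal\mPhi = \mI$, which is what justifies reporting \eqref{eq:grad_Phi} as a Euclidean gradient.
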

\begin{proof}
    The proof is given in appendix \ref{app:proof}. In the interest of generality, it is worth mentioning that if the reduced-order dynamics are, e.g., cubic or higher-order, then the gradient of $L$ with respect to the higher-order tensors can be computed in a similar fashion. 
\end{proof}

% The proposition shows that the gradient of the cost function with respect to the optimization parameters does not require querying the full-order dynamics that generated the training trajectory $\vy(t)$, so the optimization problem can be solved non-intrusively. 
% that the optimization problem can indeed be solved non-intrusively since no gradient evaluation requires querying the full-order right-hand side $\vf$ in \eqref{eq:fom} or its linearization.

\subsection{Connection with existing methods}
\label{subsec:connection}

While our model reduction framework shares similarities with several existing methods, we would like to emphasize a natural connection with the recently-developed Trajectory-based Optimization for Oblique Projections (TrOOP) \cite{Otto2022optimizing} and the Operator Inference framework introduced in~\cite{Peherstorfer2016}.

TrOOP is a model reduction framework whereby a Petrov-Galerkin reduced-order model of the form \eqref{eq:rom} is obtained by optimizing the projection operator $\mathbb{P}$ against trajectories of the full-order model \eqref{eq:fom}. 
More specifically, given $r$-dimensional subspaces $V = \text{Range}(\mPhi)$ and $W = \text{Range}(\mPsi)$, TrOOP seeks an optimal $\mathbb{P}$ by solving the following optimization problem 
\begin{equation}
\label{eq:opt_problem_troop}
    \begin{aligned}
        \min_{\left(V,W\right) \in \Mcal_{\text{TrOOP}}}\,\,\, J_{\text{TrOOP}} &=  \sum_{i=0}^{N-1}\lVert \vy(t_i) - \hat{\vy}(t_i)\rVert^2
    \end{aligned}
\end{equation}
subject to \eqref{eq:rom} (or, equivalently, to \eqref{eq:rom_z}), where $\mathcal{M}_{\text{TrOOP}} = \Gcal_{n,r}\times \Gcal_{n,r}$ is the product of two Grassmann manifolds. 
While the cost function \eqref{eq:opt_problem_troop} is the same as the one in \eqref{eq:opt_problem}, solving the optimization problem \eqref{eq:opt_problem_troop} is intrusive because TrOOP constrains the reduced-order dynamics to be the Petrov-Galerkin projection of the full-order dynamics.
Consequently, computing the gradient of the cost function $J_{\text{TrOOP}}$ with respect to the parameters requires differentiating through the dynamics $\vf$ in \eqref{eq:fom}.
This can be seen by deriving the gradient in a way analogous to that of Proposition \ref{prop:gradient}, or alternatively, following Proposition 4.3 in \cite{Otto2022optimizing}.
As previously discussed, not all black box solvers allow for easy differentiation of the governing equations so, for this reason, solving the TrOOP optimization problem can be infeasible in some applications. 

Operator Inference, on the other hand, is a non-intrusive model reduction framework that seeks a reduced-order model by orthogonally projecting the data onto a low-dimensional subspace and then fitting the reduced-order dynamics. 
This subspace is typically chosen as the span of the leading Proper Orthogonal Decomposition (POD) modes associated with some representative data set generated from~\eqref{eq:fom}. 
In particular, given a full-order trajectory $\vx(t_i)$ sampled from~\eqref{eq:fom} at times $t_i$, the time-derivative $\mathrm{d} \vx(t_i)/\mathrm{d}t$, the input $\vu(t_i)$, a $r$-dimensional subspace spanned by $\mPhi \in \mathbb{R}^{n\times r}$, and some parameterization of the reduced-order dynamics (e.g., $\vf_r = \mA_r \hat\vz + \mH_r \left(\hat\vz\otimes\hat\vz\right) + \mB_r\vu$), Operator Inference solves 
\begin{equation}
\label{eq:opt_problem_opinf}
    \min_{\left(\mA_r,\mH_r,\mB_r\right) \in \Mcal_{\text{OpInf}}} \quad J_{\text{OpInf}} = \sum_{i=0}^{N-1} \bigg\lVert \frac{\mathrm{d}\hat\vz(t_i)}{\mathrm{d}t} - \mA_r \hat\vz(t_i) - \mH_r: \hat\vz(t_i)\hat\vz(t_i)^\intercal - \mB_r\vu(t_i)\bigg\rVert^2,
\end{equation}
where $\hat\vz(t_i) = \mPhi^\intercal \vx(t_i)$ and $\Mcal_{\text{OpInf}} = \mathbb{R}^{r\times r}\times \mathbb{R}^{r\times r\times r} \times \mathbb{R}^{r\times m}$.
As observed in \cite{Peherstorfer2016}, equation \eqref{eq:opt_problem_opinf} can be conveniently written as a linear least-squares problem whose solution is obtained via the Moore-Penrose inverse rather than via iterative gradient-based algorithms.
Furthermore, given the least-squares nature of the problem, it is straightforward to add regularization (e.g., to promote stability and/or avoid overfitting) by penalizing the Frobenius norm of the parameters \cite{mcquarrie2021}.
While Operator Inference offers a convenient non-intrusive model reduction platform, it may suffer from the fact that it maps the high-dimensional data onto a low-dimensional space via orthogonal projection. 
We shall see that this can lead to inaccurate models if the full-order dynamics exhibit transient growth (e.g., due to non-normal mechanisms). 
% Furthermore, the Operator Inference procedure may be prone to producing unstable reduced-order models since it fits the reduced-order dynamics against the time-derivative of the reduced-order state, rather than against the reduced-order state itself. 
% This can become problematic in the presence of noise or of extremely fast time scales (e.g., in high-speed flows)
% While this issue can be mitigated by penalizing the Frobenius norm of the parameters via Tikhonov regularization \cite{mcquarrie2021}, it can nonetheless lead to poor models in the presence of noise or of extremely fast time scales. 

It is now clear that our model reduction framework merges concepts from both TrOOP and Operator Inference. 
Specifically: TrOOP seeks optimal projections while constraining the reduced-order dynamics to be of Petrov-Galerkin form, Operator Inference seeks optimal reduced-order dynamics while constraining the projection operator to be orthogonal and onto the span of POD modes, and our formulation simultaneously seeks optimal projections and optimal reduced-order dynamics. 
% \bvremark{Is it possible to put this description of the models into an equation form too?}
We thus call our formulation ``Non-intrusive Trajectory-based optimization of Reduced-Order Models'' (NiTROM). 
In closing this section, it is also worth mentioning that NiTROM solves an optimization problem similar in spirit to the one in ``low-rank dynamic mode decomposition'' \cite{sashittal2019}, where the encoder and decoder are taken to be elements of the Grassmann manifold, and the reduced-order dynamics are assumed to be linear and discrete in time. 
Furthermore, by viewing the projection operator as a linear autoencoder, we can find several connections between NiTROM and existing intrusive and non-intrusive model reduction formulations that rely on (usually nonlinear) autoencoders parameterized by neural networks. 
Recent examples may be found in \cite{fresca2021,conti2023,otto2023}, although, to the best of our understanding, the only autoencoder architecture that defines a nonlinear projection onto a curved manifold is presented in \cite{otto2023}.

\section{Application to a toy model}
\label{sec:toy_model}

In this section, we apply NiTROM to a three-dimensional toy model, and we compare with the intrusive TrOOP and POD Galerkin formulations and the non-intrusive Operator Inference.  
The model is governed by the following equations 
\begin{align}
    \dot{x}_1 &= -x_1 + 20 x_1 x_3 + u \label{eq:x1} \\
    \dot{x}_2 &= -2 x_2 + 20 x_2 x_3 + u \label{eq:x2} \\
    \dot{x}_3 &= -5 x_3 + u \label{eq:x3} \\
    y &= x_1 + x_2 + x_3, \label{eq:y}
\end{align}
where $\dot{x}_1 = \mathrm{d} x_1/\mathrm{d}t$. 
As discussed in \cite{Otto2022optimizing,Otto2023cobras} these dynamics are particularly tedious and exhibit large-amplitude transient growth because the rapidly-decaying state $x_3$ has a large impact on the remaining states through the quadratic nonlinearity. 
Here, we seek a two-dimensional reduced-order model capable of predicting the time-history of the measured output $y$ in response to step inputs $u \in [0.01,\beta)$, where $\beta$ is defined momentarily. 
For a given input $u$, the system admits a steady state solution with $x_3 = u/5$.
Substituting $x_3$ into \eqref{eq:x1} and \eqref{eq:x2}, it is easy to verify that the forced steady state is stable if and only if 
\begin{equation}
    -1 + \frac{20}{5}u < 0 \quad \text{and}\quad -2 + \frac{20}{5}u < 0.
\end{equation}
Thus, the system \eqref{eq:x1}--\eqref{eq:x3} admits bounded step responses so long as $u < 5/20 = 0.25$. 
We therefore take $\beta = 0.25$.
Given the quadratic nature of the full-order dynamics, we seek quadratic reduced-order models of the form 
\begin{align}
    \frac{\mathrm{d}\hat{\vz}}{\mathrm{d}t} &= \mA_r \hat{\vz} + \mH_{r}:\hat\vz\hat\vz^\intercal + \mPsi^\intercal \vu \label{eq:rom_toy_model}\\
    \hat{y} &= \mC \mPhi\minv \hat\vz,
\end{align}
where $\mC = \left[1\,1\,1\right]$ is a row vector and $\vu = \left(u,u,u\right)$. 

\begin{figure}
\centering
\begin{minipage}{0.48\textwidth}
\begin{tikzonimage}[trim= 10 15 10 10,clip,width=0.85\textwidth]{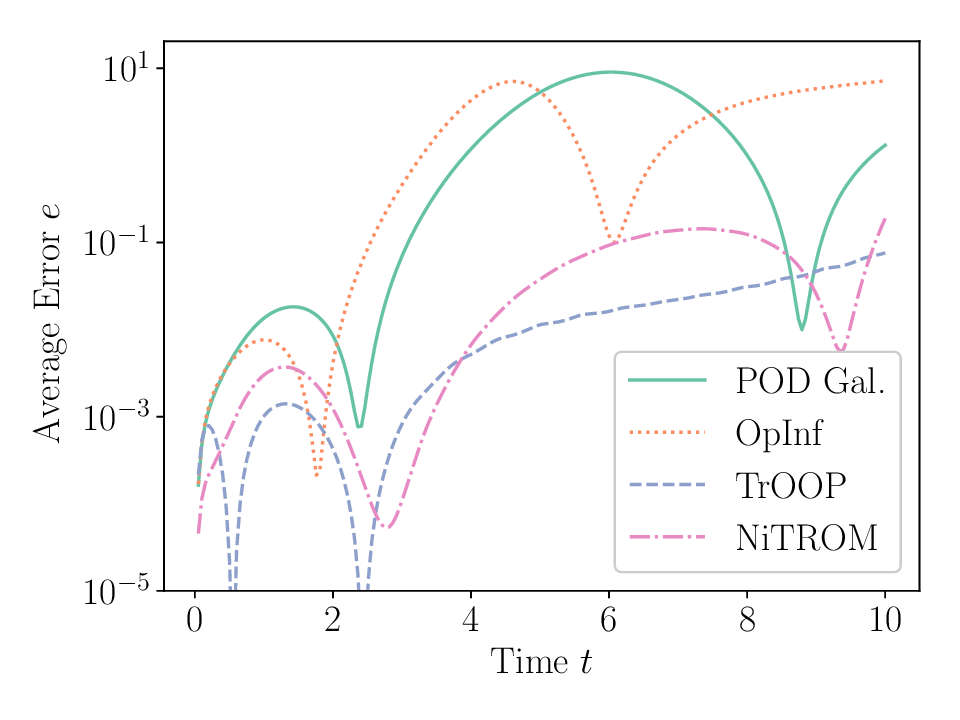}%[tsx/show help lines]
\node at (0.23,0.88) {\small $\textit{(a)}$};
\end{tikzonimage}
\end{minipage}
% \hspace{-4ex}
\begin{minipage}{0.48\textwidth}
\begin{tikzonimage}[trim= 10 15 10 10,clip,width=0.85\textwidth]{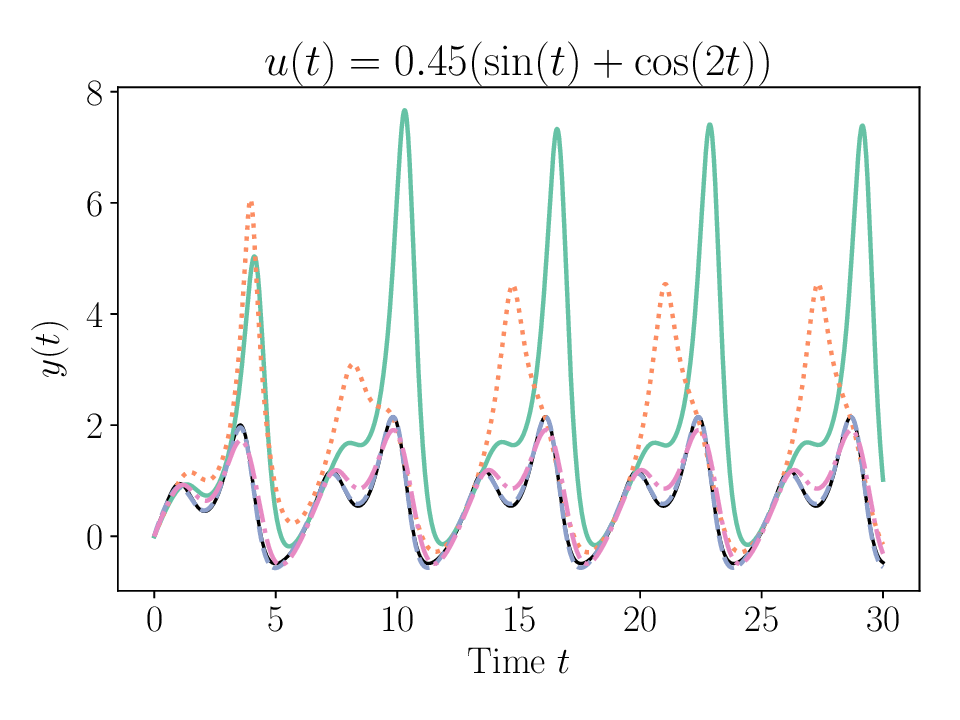}%[tsx/show help lines]
\node at (0.17,0.81) {\small $\textit{(b)}$};
\end{tikzonimage}
\end{minipage}
% \begin{tikzonimage}[trim= 10 80 10 110,clip,width=0.6\textwidth]{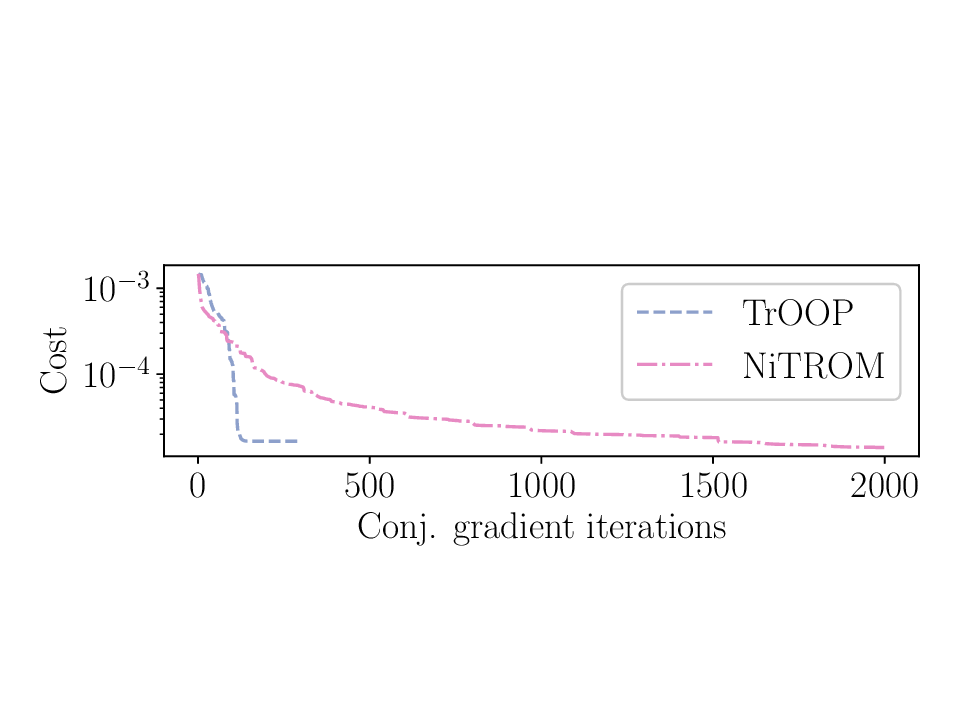}%[tsx/show help lines]
% \node at (0.19,0.4) {\small $\textit{(c)}$};
% \end{tikzonimage}
\caption{Panel (a) shows the average testing error defined in \eqref{eq:avg_error}. Panel (b) shows the time history of the output $y$ in response to a sinusoidal input $u$. The black line in panel (b) denotes the ground-truth response.}
\label{fig:toy_model}
\end{figure}

In order to train the models, we collect $y(t)$ from $N_{\text{traj}} = 4$ step responses generated with $u \in \{0.01,0.1,0.2,0.248\}$ and initialized from rest.
For each trajectory, we sample~$y$ at $N = 20$ equally-spaced times $t_i \in [0,10]$. 
The cost function for NiTROM and TrOOP is  
\begin{equation}
\label{eq:cost_function}
    J = \sum_{j=0}^{N_{\text{traj}}-1} \frac{1}{\alpha_j} \sum_{i=0}^{N-1} \lVert y^{(j)}(t_i) - \hat{y}^{(j)}(t_i)\rVert^2,
\end{equation}
with $\alpha_j = N_{\text{traj}} N \lVert \mC\overline{\vx}^{(j)}\rVert^2$, where $\overline{\vx}^{(j)}$ is the exact steady state that arises in response to the step input $u^{(j)}$. 
For both methods, the optimization was performed using the conjugate gradient algorithm available in \texttt{Pymanopt} \cite{pymanopt}, with the Euclidean gradient defined following Proposition \ref{prop:gradient}. 
Both methods were initialized with $\mPsi = \mPhi$ given by the leading two POD modes computed from the four training step responses.
Additionally, NiTROM was provided with initial reduced-order tensors computed via Galerkin projection of the full-order dynamics onto the POD modes.
The cost function for Operator Inference is 
\begin{equation}
    \begin{aligned}
        J_{\text{OpInf}} &= \sum_{j=0}^{N_\text{traj}-1}\frac{1}{\alpha_j}\sum_{i=0}^{N-1} \bigg\lVert \frac{\mathrm{d}\hat\vz^{(j)}(t_i)}{\mathrm{d}t} - \mA_r \hat\vz^{(j)}(t_i) - \mH_r : \hat\vz^{(j)}(t_i)\hat\vz^{(j)}(t_i)^\intercal - \mPhi^\intercal\vu^{(j)}(t_i)\bigg\rVert^2 \\
        &+\lambda \lVert \text{Mat}(\mH_r)\rVert^2_F,
    \end{aligned}
\end{equation}
where $\mPhi$ are the POD modes that we just described, $\hat\vz = \mPhi^\intercal \vx$, $\text{Mat}\left(\mH_r\right)$ denotes the matricization of the third-order tensor $\mH_r$ and $\lambda = 10^{-6}$ is the regularization parameter. 
The chosen $\lambda$ is (approximately) the one that yields the best possible Operator Inference model, as measured by the cost function $J$ in \eqref{eq:cost_function}.
% The parameter $\lambda$ was chosen such that the reduced-order dynamics given by Operator Inference would be stable for all inputs $u \in [0,\beta)$.
% No penalty was required for $\mA_r$.
% Finally, the POD-Galerkin model is obtained via Galerkin projection of the full-order model using the same POD modes $\mPhi$ used for Operator Inference. 

% \begin{figure}
% \centering
% \begin{minipage}{0.48\textwidth}
% \begin{tikzonimage}[trim= 10 15 10 10,clip,width=0.87\textwidth]{Figures_toy_model/toy_model_avg_error_plot}%[tsx/show help lines]
% \node at (0.9,0.20) {\small $\textit{(a)}$};
% \end{tikzonimage}
% \begin{tikzonimage}[trim= 10 15 10 10,clip,width=0.9\textwidth]{Figures_toy_model/toy_model_trajectory_04}%[tsx/show help lines]
% \node at (0.25,0.2) {\small $\textit{(c)}$};
% \end{tikzonimage}
% \end{minipage}
% % \hspace{-4ex}
% \begin{minipage}{0.48\textwidth}
% \begin{tikzonimage}[trim= 10 15 10 10,clip,width=0.9\textwidth]{Figures_toy_model/toy_model_trajectory_01}%[tsx/show help lines]
% \node at (0.25,0.2) {\small $\textit{(b)}$};
% \end{tikzonimage}
% \begin{tikzonimage}[trim= 10 15 10 10,clip,width=0.9\textwidth]{Figures_toy_model/toy_model_trajectory_06}%[tsx/show help lines]
% \node at (0.25,0.2) {\small $\textit{(d)}$};
% \end{tikzonimage}
% \end{minipage}
% \caption{Panel (a) shows the average error defined in \eqref{eq:avg_error}. The remaining panels show the time history of the observed output $y(t)$ for different step inputs $u$. The label ``FOM'' indicates the full-order model.}
% \label{fig:toy_model_error}
% \end{figure}

The models were tested by generating $100$ trajectories with $u$ sampled uniformly at random from the interval $\left[0.01,\beta\right)$.
The results are shown in figure \ref{fig:toy_model}a, where the average error over trajectories is defined as
\begin{equation}
\label{eq:avg_error}
    e(t) = \frac{1}{N_\text{traj}}\sum_{j=0}^{N_{\text{traj}}-1} \frac{1}{\alpha_j} \lVert y^{(j)}(t) - \hat{y}^{(j)}(t)\rVert^2,
\end{equation}
with $\alpha_j$ as in \eqref{eq:cost_function}.
This figure indicates that NiTROM provides an accurate prediction of the time-history of the output $y(t)$ for all amplitudes $u \in [0,\beta)$.
In particular, NiTROM's performance is comparable with TrOOP's, and they both outperform (on average) the POD Galerkin and Operator Inference models. 
In order to demonstrate that NiTROM has learned a good representation of the dynamics on a two-dimensional subspace, we also test its predictive capabilities when the system is forced with sinusoidal inputs. 
(Notice that we did not train on sinusoids.)
Results for a representative response to a sinusoid are shown in figure \ref{fig:toy_model}b, where NiTROM and TrOOP are capable of accurately predicting the time history of the output $y(t)$, while the POD Galerkin and Operator Inference reduced-order models struggle to do so. 
For completeness, we also show the decay of the loss as a function of conjugate gradient iterations for both TrOOP and NiTROM in figure \ref{fig:cost_fnct_vals}.
In particular, we observe that while NiTROM eventually attains a slightly lower loss function value, we see that TrOOP reaches the stopping criterion $\lVert \nabla J \rVert \leq 10^{-6}$ much faster than NiTROM. 
Presumably, this is due to the fact that NiTROM's optimization landscape is ``less friendly'' than TrOOP's, as NiTROM admits a larger class of solutions. 
In fact, while the larger number of parameters in NiTROM allows for a wider class of reduced-order models, it may also make it more difficult for the optimizer to find a ``good" local minimum. 

\begin{figure}
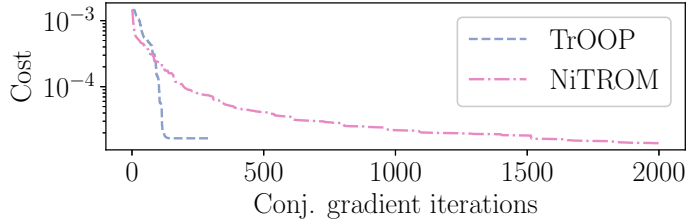

\centering
\begin{tikzonimage}[trim= 10 80 10 110,clip,width=0.6\textwidth]{Figures_toy_model_new/cost_vals_comparison}%[tsx/show help lines]
% \node at (0.19,0.4) {\small $\textit{(b)}$};
\end{tikzonimage}
\caption{Cost function value as a function of conjugate gradient iteration for the toy model.}
\label{fig:cost_fnct_vals}
\end{figure}

\section{Application to the complex Ginzburg-Landau (CGL) equation}
\label{sec:cgl}

In this section we consider the complex Ginzburg-Landau (CGL) equation 
\begin{equation}
    \label{eq:cgl} 
    \frac{\partial q}{\partial t} = \left(\nu\frac{\partial}{\partial x} +\gamma \frac{\partial^2}{\partial x^2} + \mu \right)q - a \lvert q\rvert^2 q,\quad x\in \left(-\infty,\infty\right),\,\, q(x,t)\in\mathbb{C},
\end{equation}
with parameters $a = 0.1$, $\nu = 2 + 0.4 i$, $\gamma = 1 - i$, $\mu = \left(\mu_0 - 0.2^2\right) + \mu_2 x^2/2$ with $\mu_0 = 0.38$ and $\mu_2 = -0.01$. 
Here, $i = \sqrt{-1}$.
For this choice of parameters, the origin $q(x,t)=0$ is linearly stable, but exhibits significant transient growth due to the non-normal nature of the linear dynamics \cite{ilak2010}. 
This type of behavior is common in high-shear flows (e.g., boundary layers, mixing layers and jets), making the CGL a meaningful and widely-used benchmark example. 
In this section, we are interested in computing reduced-order models capable of predicting the input-output dynamics of \eqref{eq:cgl} in response to spatially-localized inputs. 
In particular, we wish to predict the time history of complex-valued measurements
\begin{equation}
    y = Cq = \exp\bigg\{-\left(\frac{x + \overline{x}}{s}\right)^2\bigg\} q
\end{equation}
in response to complex-valued inputs $u$ that enter the dynamics according to 
\begin{equation}
    B u = \exp\bigg\{-\left(\frac{x - \overline{x}}{s}\right)^2\bigg\}u.
\end{equation}
Here, $s = 1.6$ and $\overline{x} =-\sqrt{-2\left(\mu_0 - 0.2^2\right)/\mu_2}$ is the location of the so-called ``branch I'' of the disturbance-amplification region (see \cite{ilak2010} for additional details). 
Upon spatial discretization on a grid with $N$ nodes, equation \eqref{eq:cgl} can be written as a real-valued dynamical system with cubic dynamics
\begin{equation}
\label{eq:cgl_discrete}
    \begin{aligned}
        \frac{\mathrm{d}\vq}{\mathrm{d}t} &= \mA \vq + \mH:\left(\vq \otimes \vq \otimes \vq\right) + \mB \vu \\ \vy &= \mC \vq,
    \end{aligned}
\end{equation}
where the state $\vq \in \mathbb{R}^{2N}$ contains the spatially-discretized real and imaginary components of~$q$, $\vu \in \mathbb{R}^2$ contains the real and imaginary components of the input $u$ and $\vy \in \mathbb{R}^2$ contains the real and imaginary components of the output $y$.
Thus, given the form of the full-order system, we seek cubic reduced-order models with dynamics expressed as the sum of a linear term, a cubic term and a linear input term. 

\begin{figure}
\centering
\begin{minipage}{0.48\textwidth}
\begin{tikzonimage}[trim= 10 15 10 10,clip,width=0.85\textwidth]{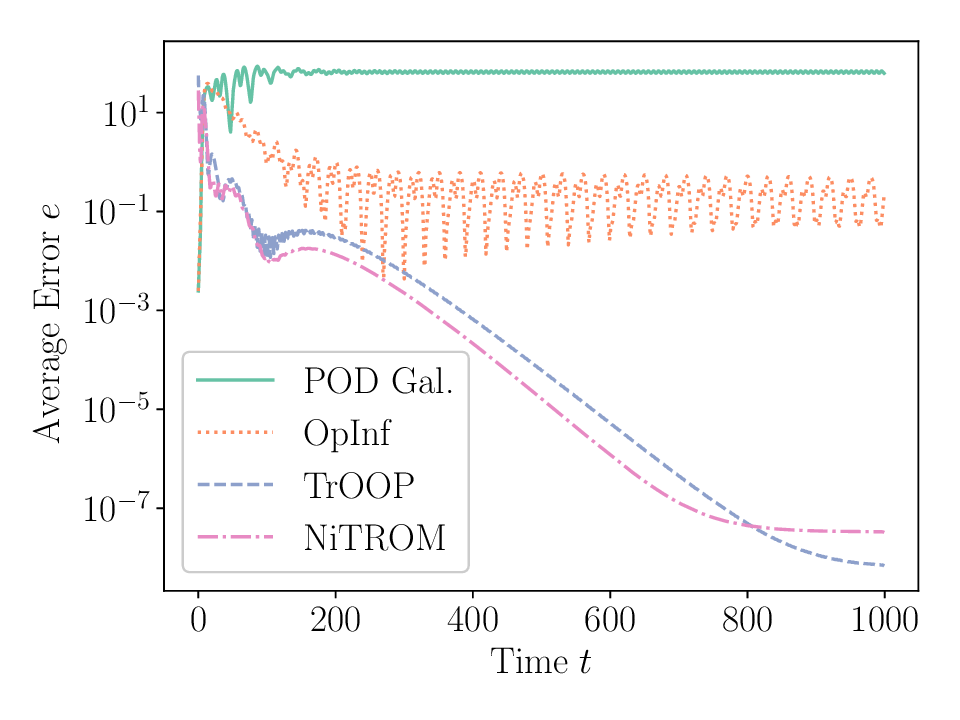}%[tsx/show help lines]
\node at (0.9,0.9) {\small $\textit{(a)}$};
\end{tikzonimage}
\end{minipage}
% \hspace{-4ex}
\begin{minipage}{0.48\textwidth}
\begin{tikzonimage}[trim= 10 15 10 10,clip,width=0.85\textwidth]{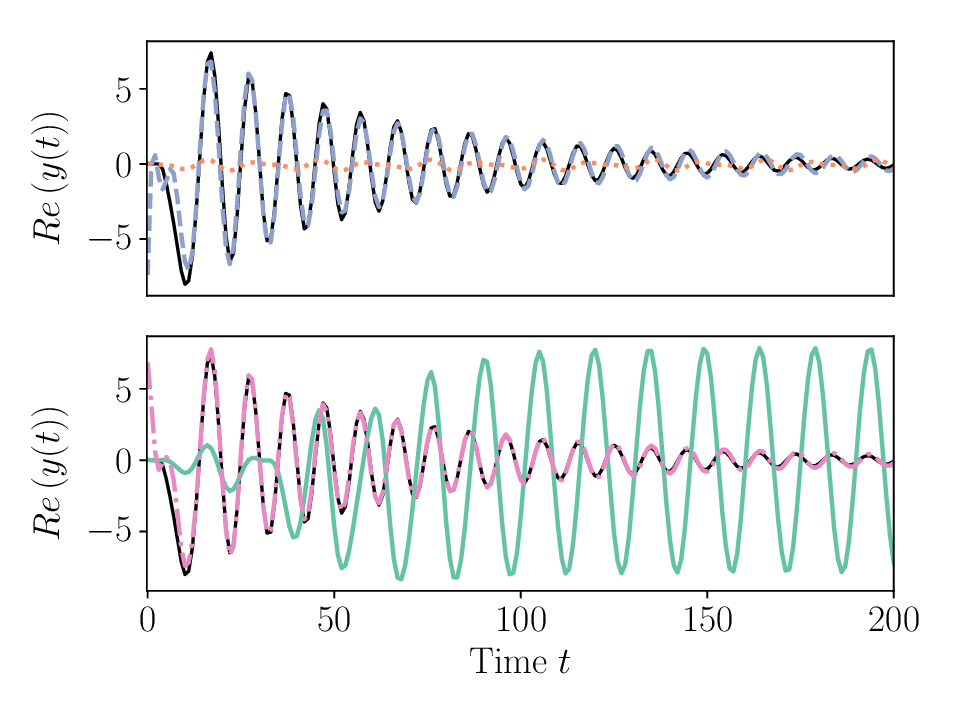}%[tsx/show help lines]
\node at (0.9,0.9) {\small $\textit{(b)}$};
\end{tikzonimage}
\end{minipage}
\caption{Panel (a) shows the average testing error defined in \eqref{eq:avg_error}. Panel (b) shows the real part of the output $y$ from a representative testing impulse response. The black line in panel (b) denotes the ground-truth response.}
\label{fig:cgl}
\end{figure}

\begin{figure}
\centering
\begin{minipage}{0.48\textwidth}
\begin{tikzonimage}[trim= 10 15 10 10,clip,width=0.85\textwidth]{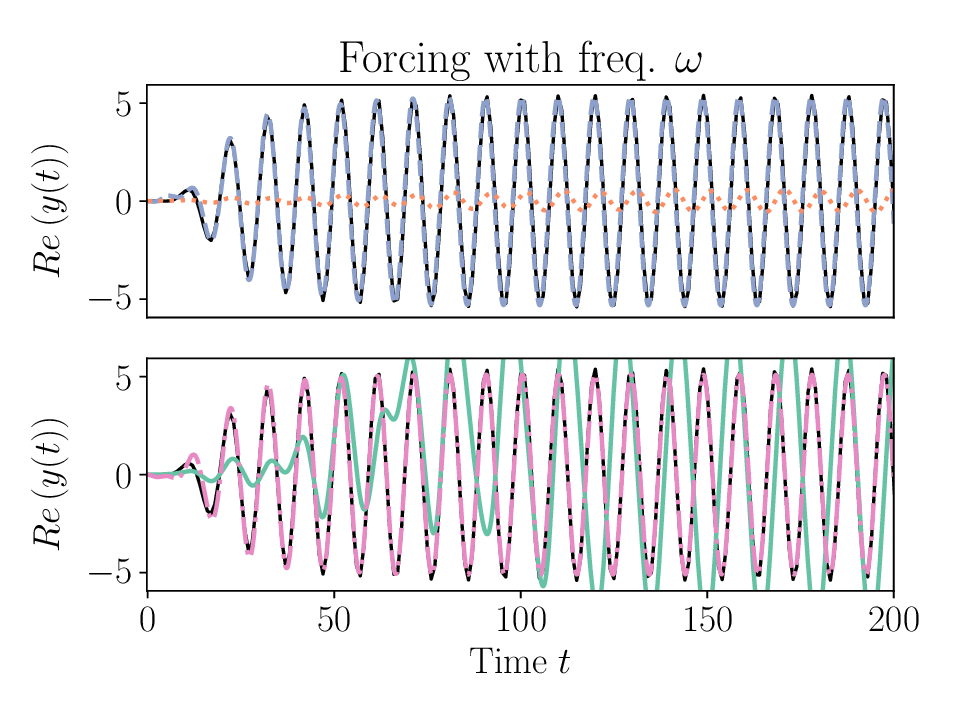}%[tsx/show help lines]
\node at (0.2,0.84) {\small $\textit{(a)}$};
\end{tikzonimage}
\end{minipage}
% \hspace{-4ex}
\begin{minipage}{0.48\textwidth}
\begin{tikzonimage}[trim= 10 15 10 10,clip,width=0.85\textwidth]{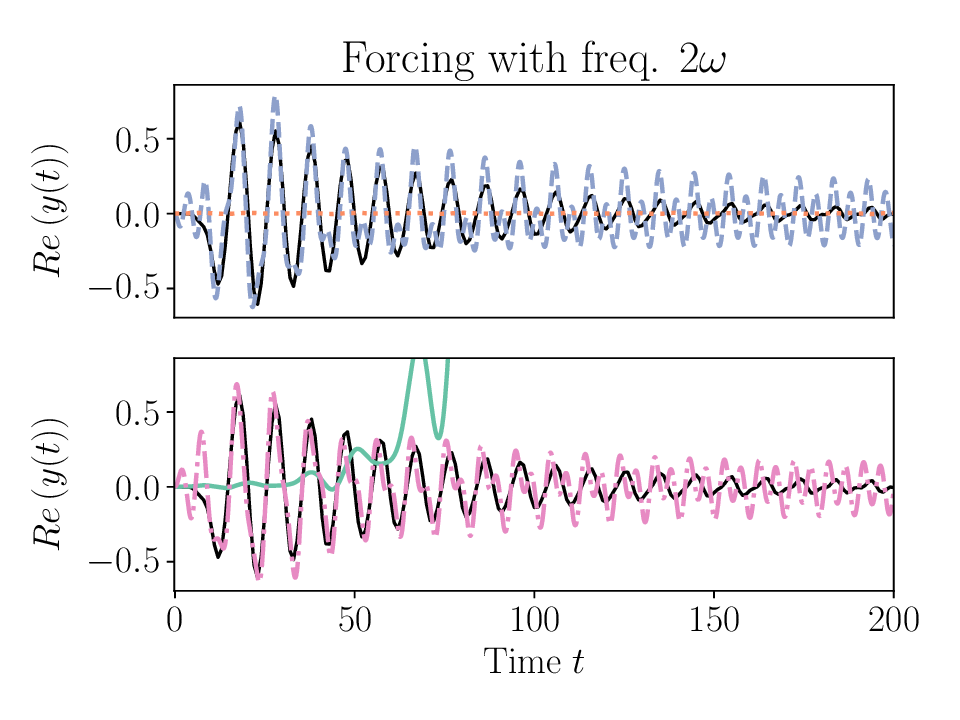}%[tsx/show help lines]
\node at (0.9,0.84) {\small $\textit{(b)}$};
\end{tikzonimage}
\end{minipage}
\caption{Real part of the output $y$ in response to a sinusoidal input with frequencies $\omega$ and $2\omega$, where $\omega$ is fundamental frequency of the system. The black continuous line indicates the ground truth, and the rest of the legend is in figure \ref{fig:cgl}a.}
\label{fig:cgl_sin}
\end{figure}

We train our models by simulating the response of \eqref{eq:cgl_discrete} to impulses $\beta \mB \ve_j$, where $\ve_j \in\mathbb{R}^2$ is the unit-norm vector in the standard basis, and $\beta \in \{-1.0,0.01,0.1,1.0\}$. 
We therefore have a total of $N_{\text{traj}} = 6$ training trajectories, and we collect the output $\vy$ at $N = 1000$ uniformly-spaced time instances $t_i \in [0,1000]$.
Since the leading five POD modes associated with the training data contain approximately $98\%$ of the variance and are sufficient to reconstruct the time-history of the output $\vy$ almost perfectly, we seek models of size $r = 5$.
% Throughout, we consider models of size $r = 5$, since the leading five POD modes of the training data set contain approximately $99.8\%$ of the variance, and five modes were sufficient to reconstruct the time-history of the output $\vy$ almost perfectly. 
The cost functions for NiTROM, TrOOP and Operator Inference are analogous to those considered in section~\ref{sec:toy_model}, except that the reduced-order dynamics are cubic and the normalization constants $\alpha_j$ in \eqref{eq:cost_function} are defined as the time-averaged energy of the output $\vy$ along the $j$th trajectory.
In Operator Inference, the regularization parameter for the reduced-order fourth-order tensor was chosen as $\lambda = 10^9$ following the same criterion described in the previous section. 
The NiTROM optimization was initialized with $\mPhi = \mPsi$ given by the first five POD modes of the training data and the reduced-order tensors provided by Operator Inference. 
The optimization was conducted using \emph{coordinate} descent by successively holding the reduced-order tensors fixed and allowing for the bases $\mPhi$ and $\mPsi$ to vary, and viceversa. 
On this particular example, we found this procedure to be less prone to getting stuck in ``bad'' local minima. 
% \bvremark{do you anticipate NiTROM solution space searches having lots of problems like this?}  
TrOOP, on the other hand, was initialized with $\mPhi$ and $\mPsi$ given by Balanced Truncation \cite{Moore1981principal,Rowley2005model} since the initialization with POD modes led to a rather inaccurate local minimum. 
TrOOP's optimization was carried out using conjugate-gradient.

\begin{figure}
\centering
\begin{tikzonimage}[trim= 10 80 10 110,clip,width=0.6\textwidth]{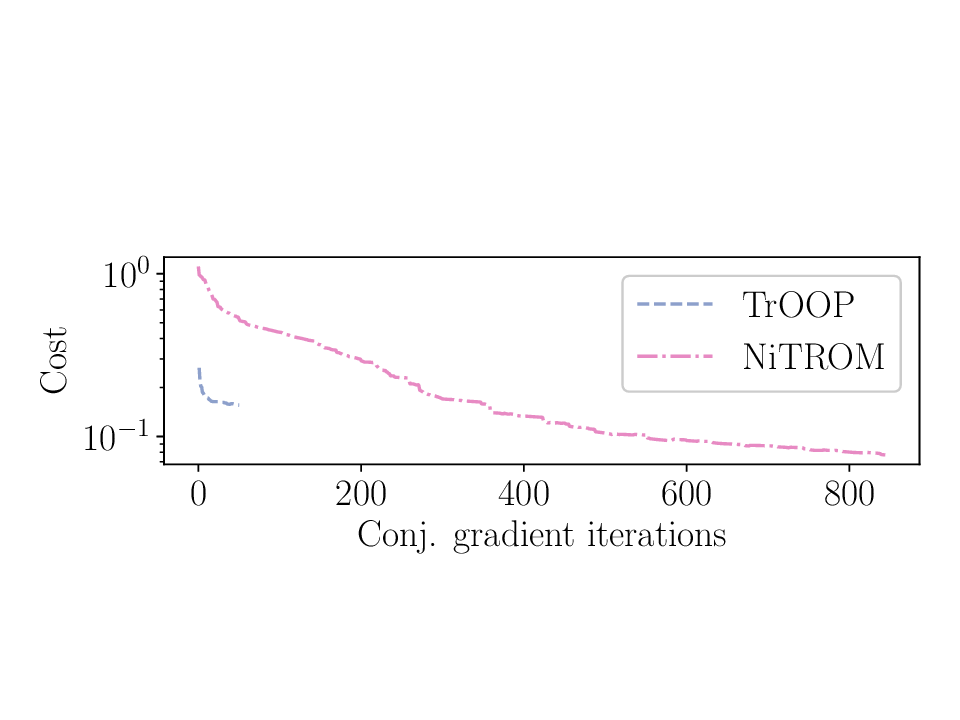}%[tsx/show help lines]
% \node at (0.19,0.4) {\small $\textit{(b)}$};
\end{tikzonimage}
\caption{Cost function value as a function of conjugate gradient iteration for the CGL equation. TrOOP was initialized using Balanced Truncation, while NiTROM using Operator Inference.}
\label{fig:cost_fnct_vals_cgl}
\end{figure}

We test the performance of our model by generating 50 impulse responses $\beta \mB \ve_j$ with $\beta$ drawn uniformly at random from $[-1.0,1.0]$. 
The average error across all testing trajectories is shown in figure \ref{fig:cgl}a, while a representative impulse response is shown in figure \ref{fig:cgl}b. 
Overall, we see that both NiTROM and TrOOP achieve very good predictive accuracy and are capable of tracking the output through the heavy oscillatory transients. 
By contrast, Operator Inference and the POD-Galerkin model exhibit higher errors, and this is most likely due to the highly non-normal nature of the CGL dynamics. 
In fact, both these methods achieve dimensionality reduction by \emph{orthogonally} projecting the state onto the span of POD modes, while, as previously discussed, reduced-order models for non-normal systems typically require carefully chosen oblique projections. 
Finally, we demonstrate the predictive accuracy of NiTROM on unseen sinusoidal inputs of the form $0.05 \sin(k\omega t)\mB\vu/\lVert \mB\vu\rVert$, where $\vu \in\mathbb{R}^2$ is chosen at random and $\omega \approx 0.648$ is the natural frequency of the system.
The results for frequencies $\omega$ and $2\omega$ are shown in figure \ref{fig:cgl_sin}, where we see that NiTROM provides an accurate estimate of the response of the system at frequency $\omega$ and an acceptable prediction at frequency $2\omega$. 
The reason why the prediction at $2\omega$ for both TrOOP and NiTROM is not as clean as the prediction at $\omega$ is because the training data exhibited dominant oscillatory dynamics at the natural frequency $\omega$ and very little contributions from other frequencies.
Nonetheless, the predictions at $2\omega$ are better than those provided by POD-Galerkin and Operator Inference.
% Therefore, the training process led the models to inherit the sensitivity of the full-order model to frequency-$\omega$ perturbations and to neglect the low-energy oscillations at all other frequencies. 
Before closing this example, we report on the loss function value for both TrOOP and NiTROM in figure \ref{fig:cost_fnct_vals_cgl}, but we keep in mind that TrOOP was initialized using Balanced Truncation, while NiTROM was initialized using Operator Inference.

\section{Application to the lid-driven cavity flow}

\begin{figure}
\centering
\begin{minipage}{0.48\textwidth}
\begin{tikzonimage}[trim= 40 15 20 15,clip,width=0.85\textwidth]{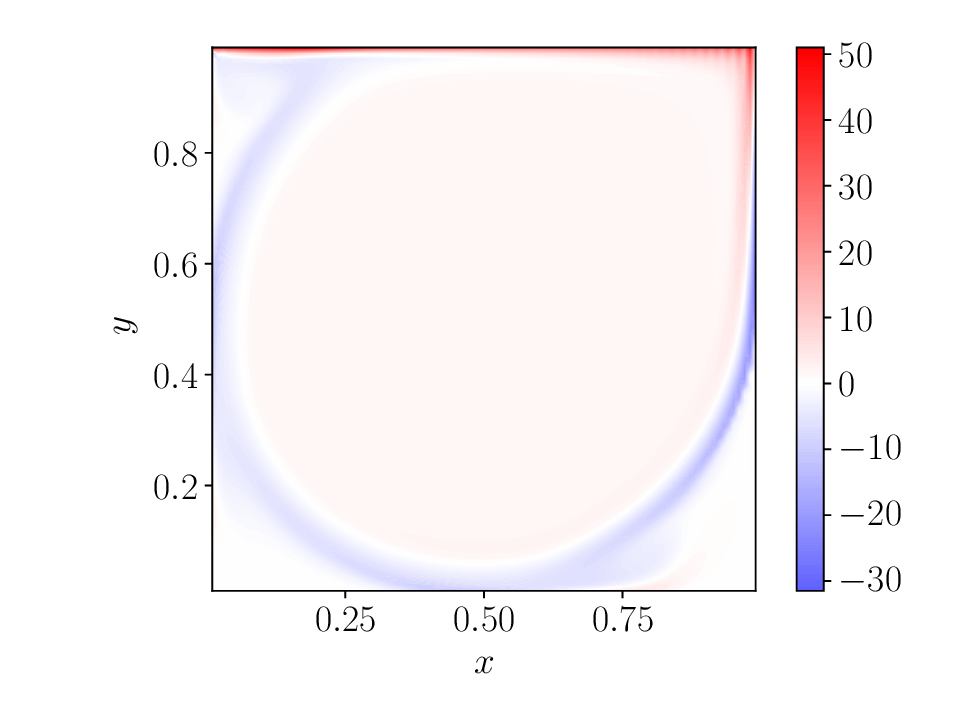}%[tsx/show help lines]
\node at (0.75,0.23) {\small $\textit{(a)}$};
\end{tikzonimage}
\end{minipage}
% \hspace{-4ex}
\begin{minipage}{0.48\textwidth}
\begin{tikzonimage}[trim= 10 10 5 5,clip,width=0.85\textwidth]{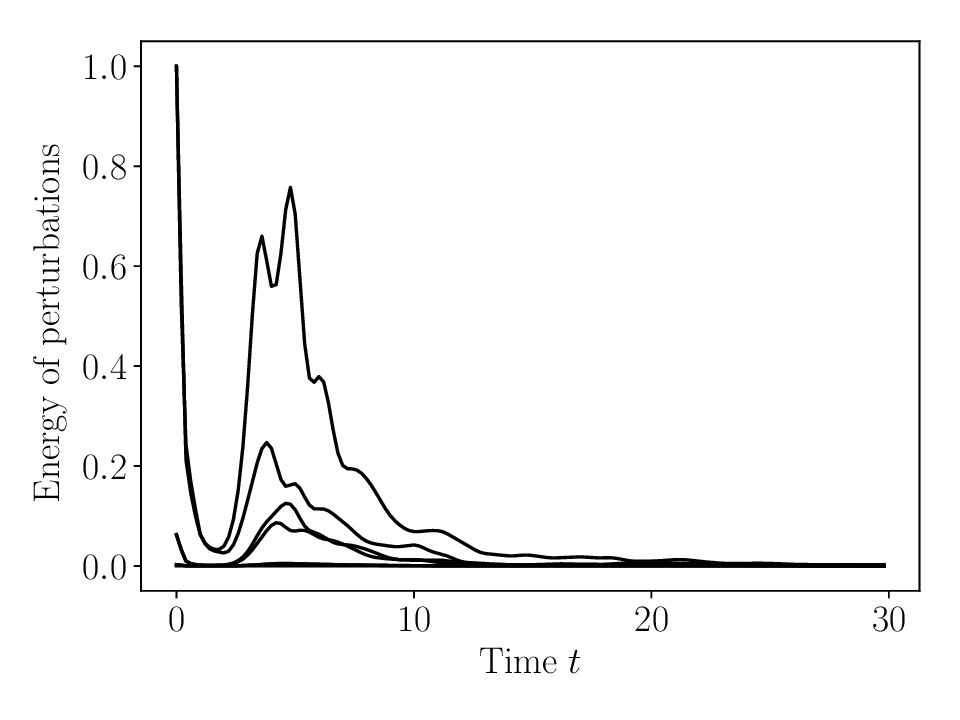}%[tsx/show help lines]
\node at (0.9,0.245) {\small $\textit{(b)}$};
\end{tikzonimage}
\end{minipage}
\caption{Panel (a) shows the vorticity field from the steady-state solution that exists at $Re = 8300$, and panel (b) shows the energy (i.e., the squared two norm) of the seven training trajectories.}
\label{fig:cavity}
\end{figure}

In this section, we apply our model reduction procedure to an incompressible fluid flow inside a lid-driven square cavity. 
The flow dynamics are governed by the incompressible Navier-Stokes equation and by the continuity equation 
\begin{align}
    \frac{\partial \vu}{\partial t} + \vu \cdot\nabla \vu &= -\nabla p + Re^{-1}\nabla^2\vu \\ 
    \nabla\cdot \vu &= 0, \label{eq:continuity}
\end{align}
where $\vu(\vx,t) = \left(u(\vx,t),v(\vx,t)\right)$ is the two-dimensional velocity vector, $p(\vx,t)$ is the pressure and $Re$ is the Reynolds number. 
Throughout, we consider a two-dimensional spatial domain $D = [0,1]\times [0,1]$ with zero-velocity boundary conditions at all walls, except for $u = 1$ at the top wall. 
The Reynolds number is held at $Re = 8300$, where the flow admits a linearly stable steady state (shown in figure \ref{fig:cavity}a), but exhibits large amplification and significant transient growth due to the non-normal nature of the underlying linear dynamics. 
The high degree of non-normality and consequent transient growth can be appreciated by looking at figure \ref{fig:cavity}b, where we show the time-history of the energy of several impulse responses. 
In particular, we see that after an initial decay, the energy spikes around $t  = 5$ before decaying back to zero. 
We discretize the governing equations using a second-order finite-volume scheme on a uniform fully-staggered grid of size $N_x \times N_y = 100 \times 100$. 
With this spatial discretization, no pressure boundary conditions need to be imposed. 
The temporal integration is carried out using the second-order fractional step (projection) method introduced in \cite{chorin68}.
Our solver was validated by reproducing some of the results in \cite{ghia1982}.

In this example, we are interested in computing data-driven reduced-order models capable of predicting the evolution of the flow in response to spatially-localized inputs that enter the $x$-momentum equation as
\begin{equation}
\label{eq:Bw}
    B(x,y) w(t) = \exp\bigg\{-5000\left(\left(x - x_c\right)^2 + \left(y - y_c\right)^2\right)\bigg\} w(t),
\end{equation}
with $x_c = y_c = 0.95$.
Upon spatial discretization and removal of the pressure via projection onto the space of divergence-free vector fields, the dynamics are governed by 
\begin{equation}
\label{eq:cavity_disc}
    \frac{\mathrm{d}}{\mathrm{d}t} \vq = \mA \vq + \mH\left(\vq\otimes \vq\right) + \mB w,
\end{equation}
where $\vq \in \mathbb{R}^{N}$ is the spatially-discretized divergence-free velocity field (with $N = 2N_x N_y = 2\times 10^{4}$), $\mA$ governs the linear dynamics, $\mH$ is a third-order tensor representative of the quadratic nonlinearity in the Navier-Stokes equation and $\mB$ is the input matrix obtained from~\eqref{eq:Bw} after enforcing that $\mB$ generates a divergence-free vector.
(For convenience, we also scale $\mB$ to unit norm.)
Throughout the remainder of this section, we take $\vy = \vq$ (i.e., we observe the time evolution of the whole state).

\subsection{Training procedure}

We collect seven training trajectories by generating impulses responses of \eqref{eq:cavity_disc} with $w \in \{-1.0,-0.25,-0.05,0.01,0.05,0.25,1.0\}$.
The time-history of their energy is shown in figure \ref{fig:cavity}b.
We save 150 snapshots from each trajectory at equally-distributed temporal instances $t \in [0,30]$, and then we perform POD. 
Using the first 50 POD modes, which contain $99.4\%$ of the variance in the training data, we compute an Operator Inference model by minimizing the cost function \eqref{eq:opt_problem_opinf}.
We normalize the trajectories by their time-averaged energy and, as in the previous sections, we also penalize the Frobenius norm of the third-order tensor $\mH$ with the regularization parameter take to be $\lambda = 10^{-2}$.

Given the complexity of the problem and the length of the trajectories, we train NiTROM as follows. 
First, we pre-project the data onto the span of the first 200 POD modes, which contain $> 99.99\%$ of the variance. 
This guarantees that the optimal NiTROM bases~$\mPhi$ and~$\mPsi$ satisfy the divergence-free constraint in \eqref{eq:continuity}, since the POD modes are computed from divergence-free snapshots.
Second, after initializing the search with the Operator Inference model, we train by progressively extending the length of the forecasting horizon. 
That is, we first optimize a model to make predictions up to $t = 5$, then $t = 10$, and so forth all the way up to $t = 30$. 
% \bvremark{is this a good way in general of avoiding local minima?}
Since, after a first pass, our model exhibited slightly unstable linear dynamics (possibly due to the presence of numerical noise and/or weak decaying oscillations in the tail end of the training data), we extended the training horizon all the way up to $t = 60$ and added a stability-promoting penalty to our cost function as follows, 
\begin{equation}
    \widetilde{J} = J_{\text{NiTROM}} + \mu \lVert \hat{\vz}_\text{lin}(t_f)\rVert^2.
\end{equation}
Here, $t_f$ is a sufficiently large time (chosen to be $100$ in our case) and $\hat{\vz}_{\text{lin}}$ satisfies 
\begin{equation}
    \frac{\mathrm{d} \hat\vz_{\text{lin}}}{\mathrm{d} t} = \mA_r \hat\vz_{\text{lin}},\quad \hat\vz_{\text{lin}}(0) = \hat\vz_{\text{lin},0},
\end{equation}
with $\hat\vz_{\text{lin},0}$ a unit-norm random vector. 
Notice that this penalty is truly stability-promoting, as it is analogous to penalizing the Frobenius norm of $e^{\mA_r t_f}$, and shrinking the Frobenius norm of the exponential map corresponds to pushing the eigenvalues of $\mA_r$ farther into the left-half plane.
% This penalty is designed so that if $\mA_r$ has eigenvalues in the right-half plane, the penalty term will dominate the cost function $\widetilde{J}$ and gradient-descent should then prioritize models with linearly-stable dynamics (i.e., with $\lVert \hat{\vz}_\text{lin}(t_f)\rVert^2 \ll 1$).
The gradient of the penalty term with respect to $\mA_r$ can be computed straightforwardly following the same logic used in Proposition \ref{prop:gradient}.
The regularization parameter $\mu$ was held at zero for most of the training, until we reached a forecasting horizon $t = 60$ when we set $\mu = 10^{-3}$.
The training was conducted using coordinate descent as described in section \ref{sec:cgl}, and we stopped the optimization after approximately 2000 iterations.

\begin{figure}
\centering
\begin{minipage}{0.48\textwidth}
\begin{tikzonimage}[trim= 10 10 5 5,clip,width=0.85\textwidth]{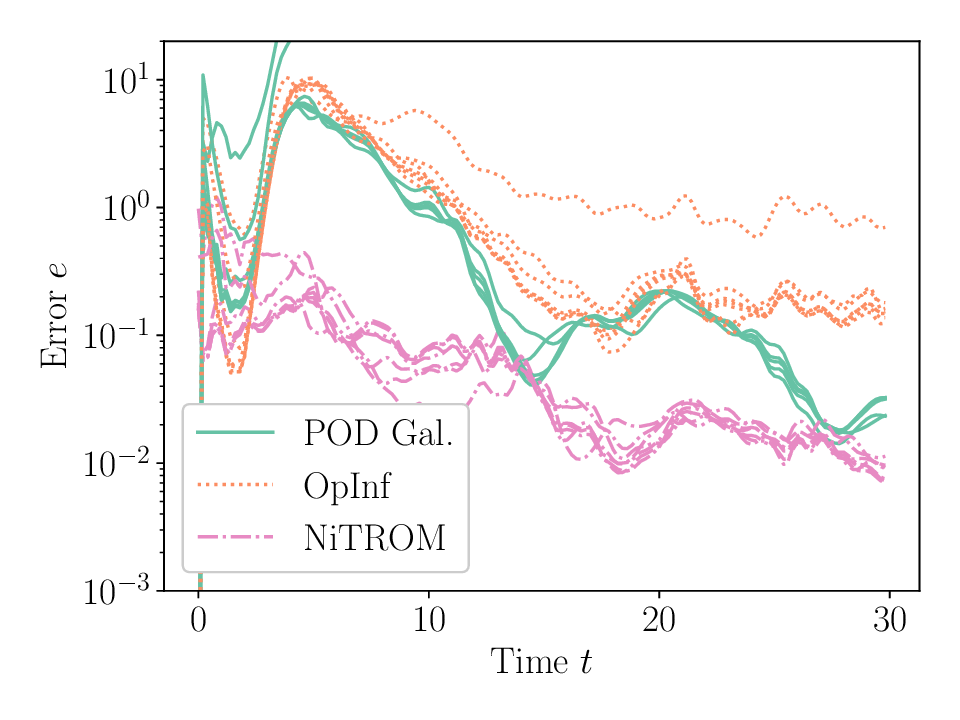}%[tsx/show help lines]
\node at (0.9,0.22) {\small $\textit{(a)}$};
\end{tikzonimage}
\end{minipage}
% \hspace{-4ex}
\begin{minipage}{0.48\textwidth}
\begin{tikzonimage}[trim= 10 10 5 5,clip,width=0.85\textwidth]{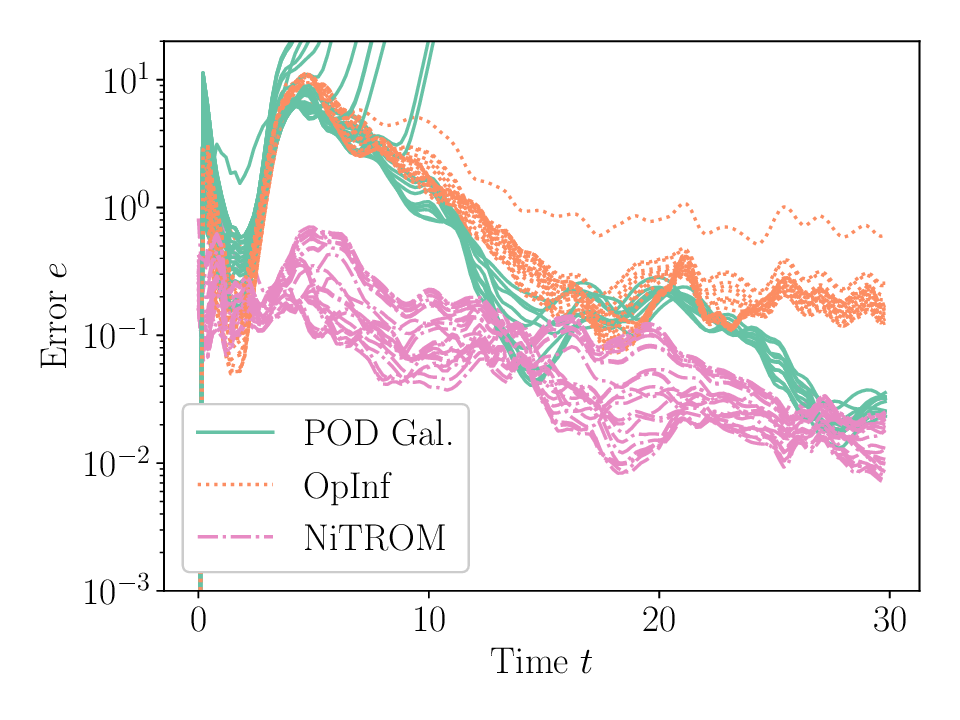}%[tsx/show help lines]
\node at (0.9,0.22) {\small $\textit{(b)}$};
\end{tikzonimage}
\end{minipage}
\caption{Panel (a) shows the training error from the 7 training impulses responses, and panel (b) shows the testing error computed for 25 unseen impulse responses. The error is defined in equation \eqref{eq:error_cavity}.}
\label{fig:cavity_errors}
\end{figure}

\begin{figure}
\centering
\begin{minipage}{0.48\textwidth}
\begin{tikzonimage}[trim= 5 10 5 5,clip,width=0.85\textwidth]{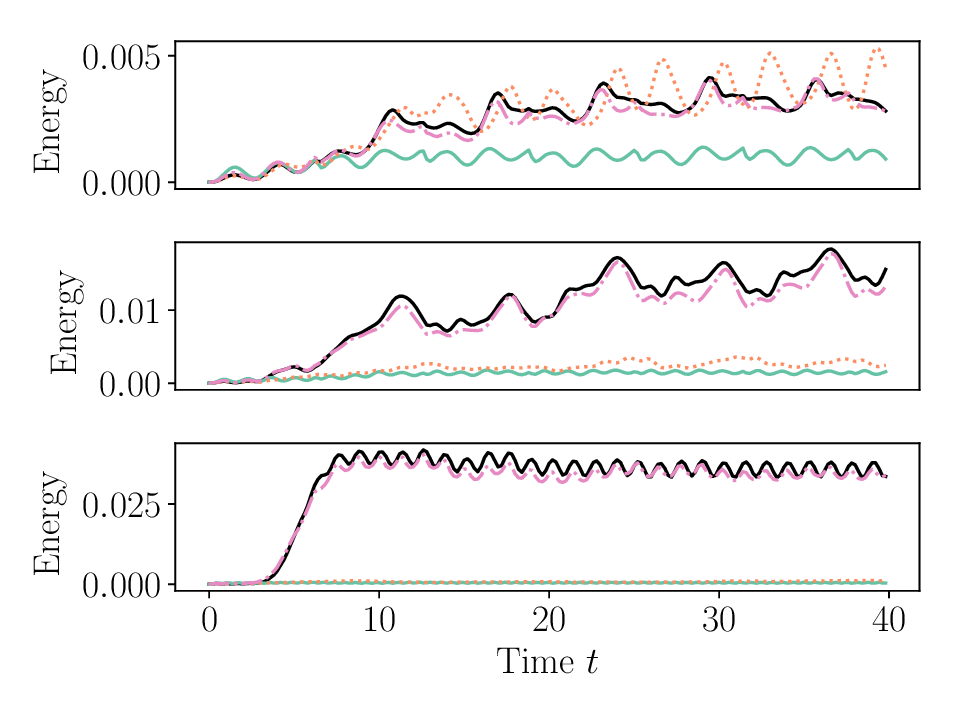}%[tsx/show help lines]
\node at (0.575,1.0) {\small $w(t) = 0.1\sin(1.25 k t)$};
\node at (0.25,0.9) {\small $k = 1$};
\node at (0.25,0.6) {\small $k = 2$};
\node at (0.25,0.31) {\small $k = 4$};
\end{tikzonimage}
\end{minipage}
% \hspace{-4ex}
\begin{minipage}{0.48\textwidth}
\begin{tikzonimage}[trim= 5 10 5 5,clip,width=0.85\textwidth]{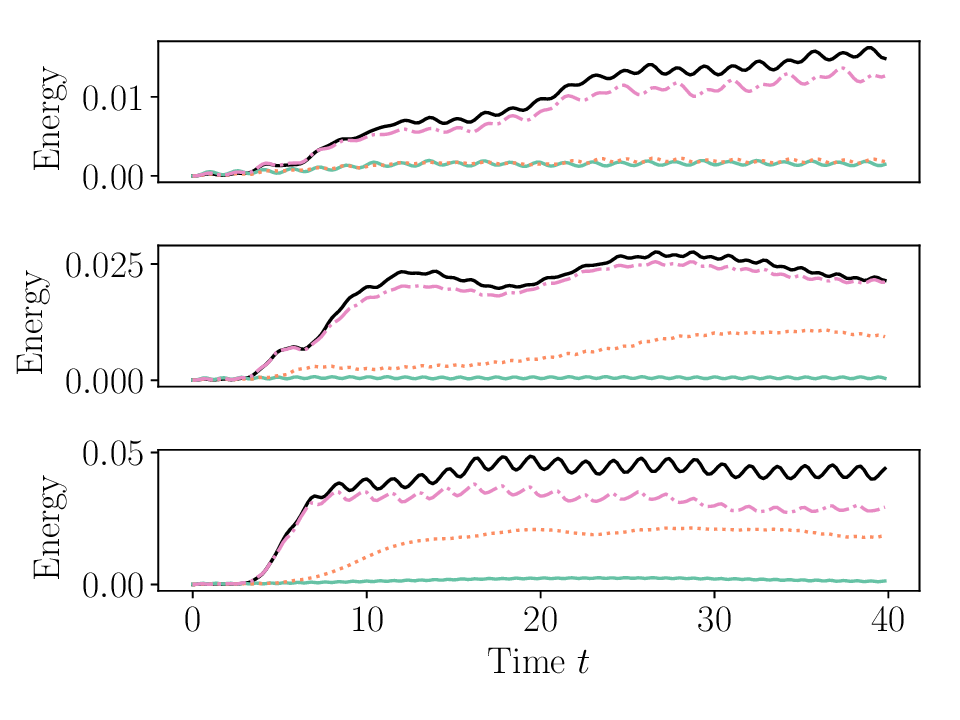}%[tsx/show help lines]
\node at (0.575,1.0) {\small $w(t) = 0.1\sin(k t)$};
\node at (0.235,0.9) {\small $k = 2$};
\node at (0.235,0.6) {\small $k = 3$};
\node at (0.235,0.31) {\small $k = 4$};
\end{tikzonimage}
\end{minipage}
\caption{Evolution of the energy of the perturbations in response to sinusoidal inputs $w(t)$. The black line is the full-order model and the rest of the legend is in figure \ref{fig:cavity_errors}.}
\label{fig:cavity_freq_response}
\end{figure}

\begin{figure}
\centering
\subfloat{
    \begin{tikzonimage}[trim= 45 15 80 6,clip,width=0.4\textwidth]{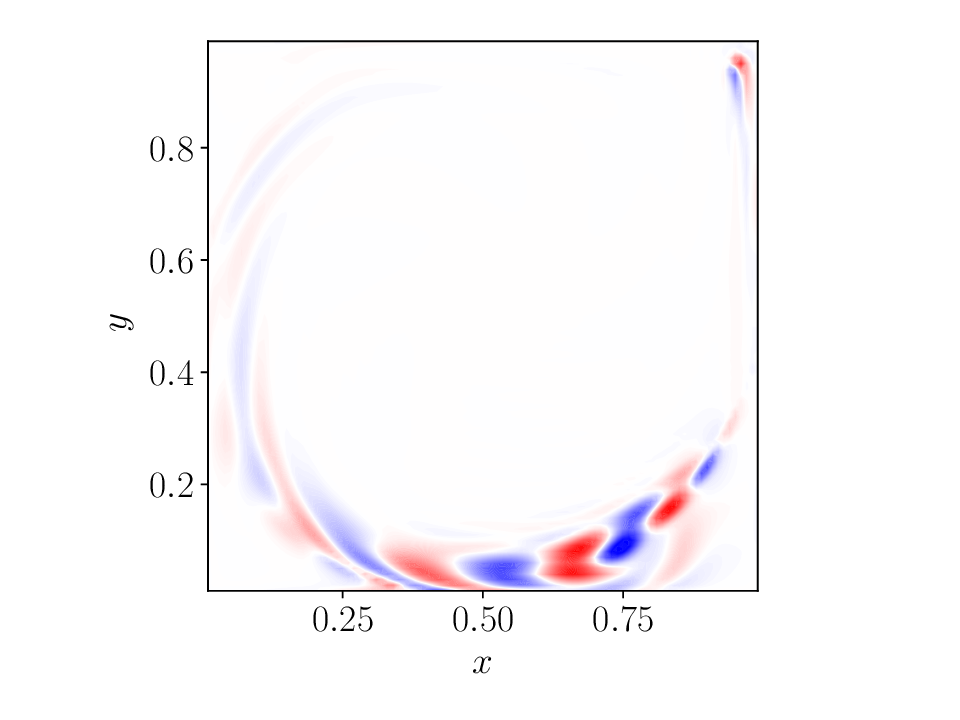}%[tsx/show help lines]
    \node at (0.55,0.55) {\small FOM};
    \end{tikzonimage}
}
\hspace{4ex}
\subfloat{
    \begin{tikzonimage}[trim= 45 15 80 6,clip,width=0.4\textwidth]{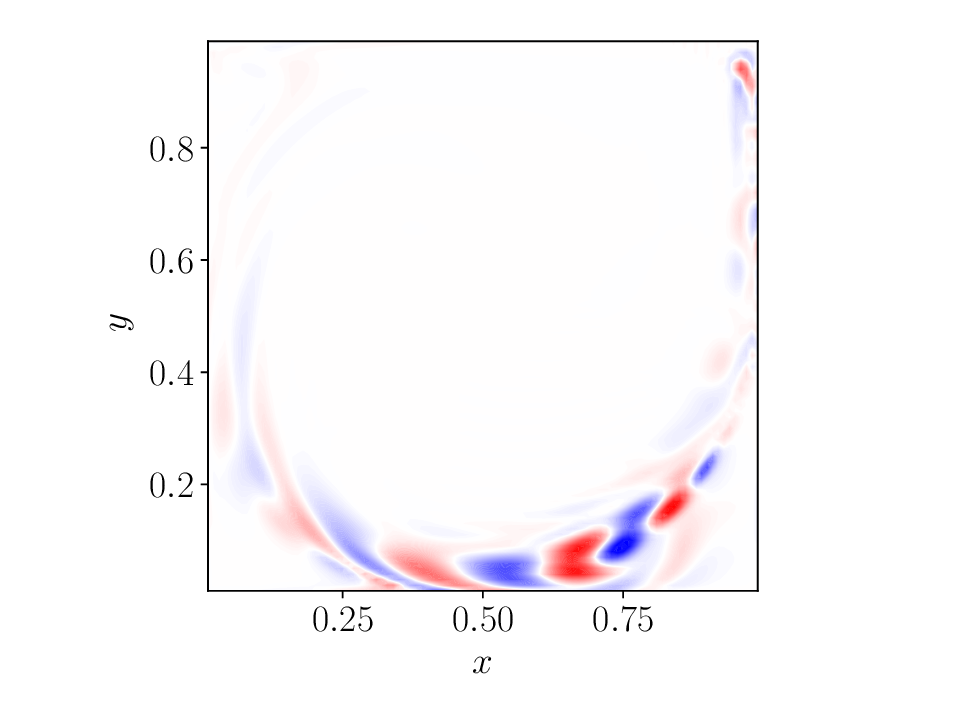}%[tsx/show help lines]
    \node at (0.55,0.55) {\small NiTROM};
    \end{tikzonimage}
}\\
\subfloat{
    \begin{tikzonimage}[trim= 45 15 80 6,clip,width=0.4\textwidth]{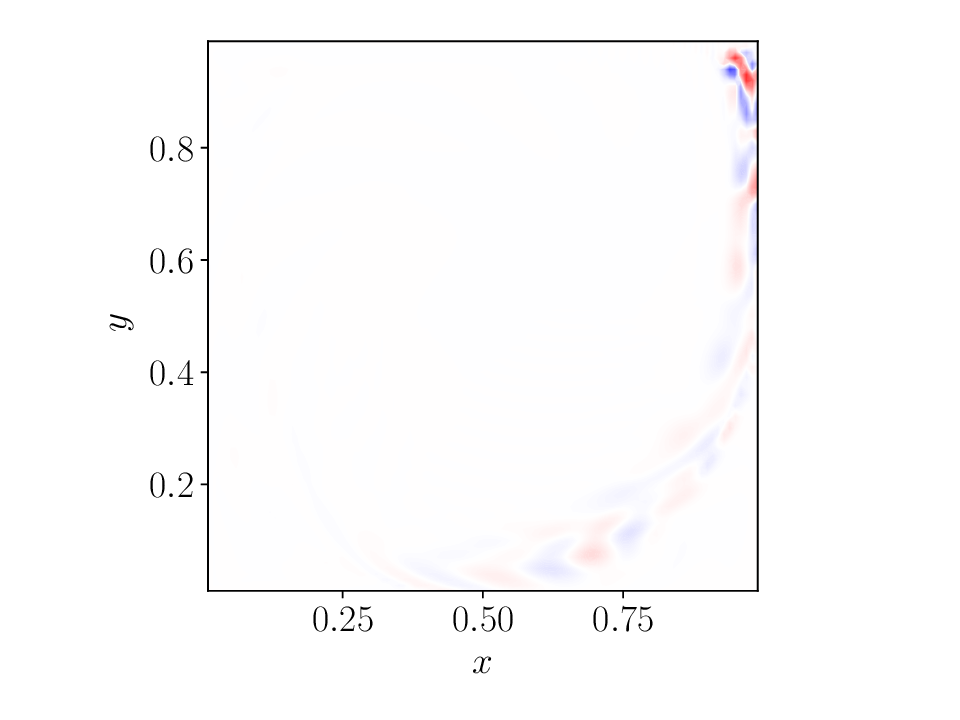}%[tsx/show help lines]
    \node at (0.55,0.55) {\small OpInf};
    \end{tikzonimage}
}
\hspace{4ex}
\subfloat{
    \begin{tikzonimage}[trim= 45 15 80 6,clip,width=0.4\textwidth]{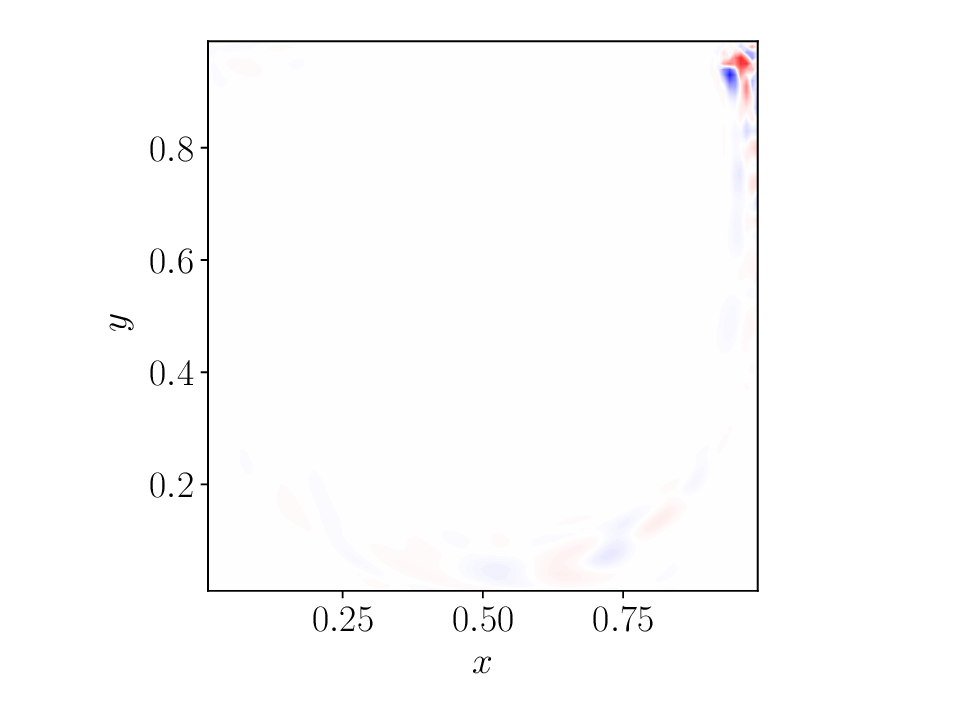}%[tsx/show help lines]
    \node at (0.55,0.55) {\small POD Gal.};
    \end{tikzonimage}
}
\caption{Vorticity field at time $t = 30$ from the trajectory in the bottom left panel of figure~\ref{fig:cavity_freq_response}. Red indicates positive vorticy with maximum value $0.7$, blue indicates negative vorticity with minimum value $-0.7$ and white is zero vorticity.}
\label{fig:cavity_snaps_1}
\end{figure}

\begin{figure}
\centering
\subfloat{
    \begin{tikzonimage}[trim= 45 15 80 6,clip,width=0.4\textwidth]{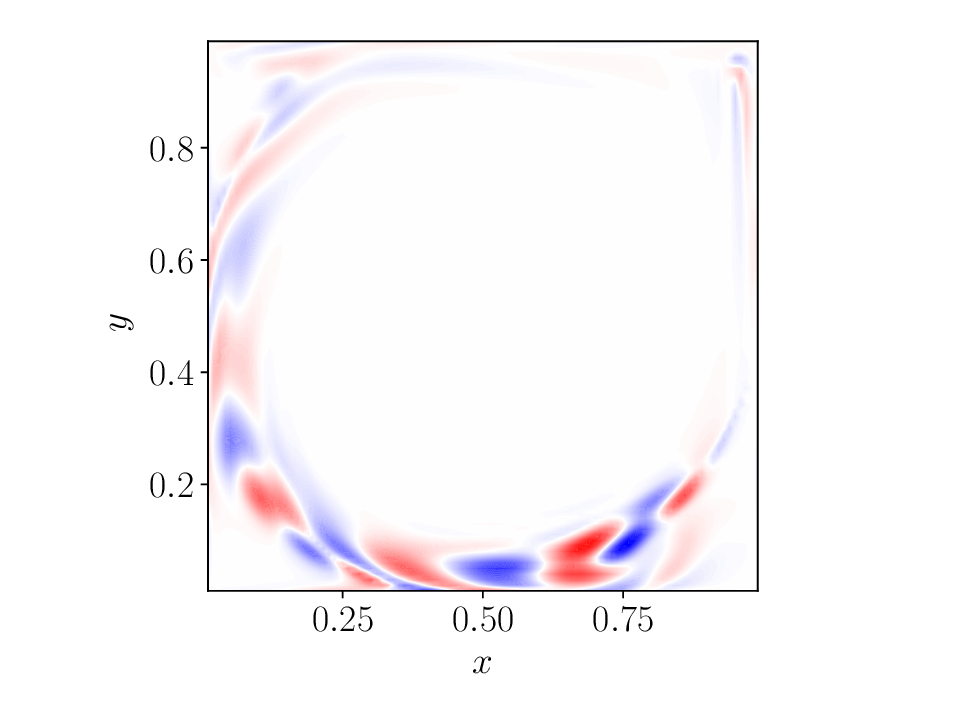}%[tsx/show help lines]
    \node at (0.55,0.55) {\small FOM};
    \end{tikzonimage}
}
\hspace{4ex}
\subfloat{
    \begin{tikzonimage}[trim= 45 15 80 6,clip,width=0.4\textwidth]{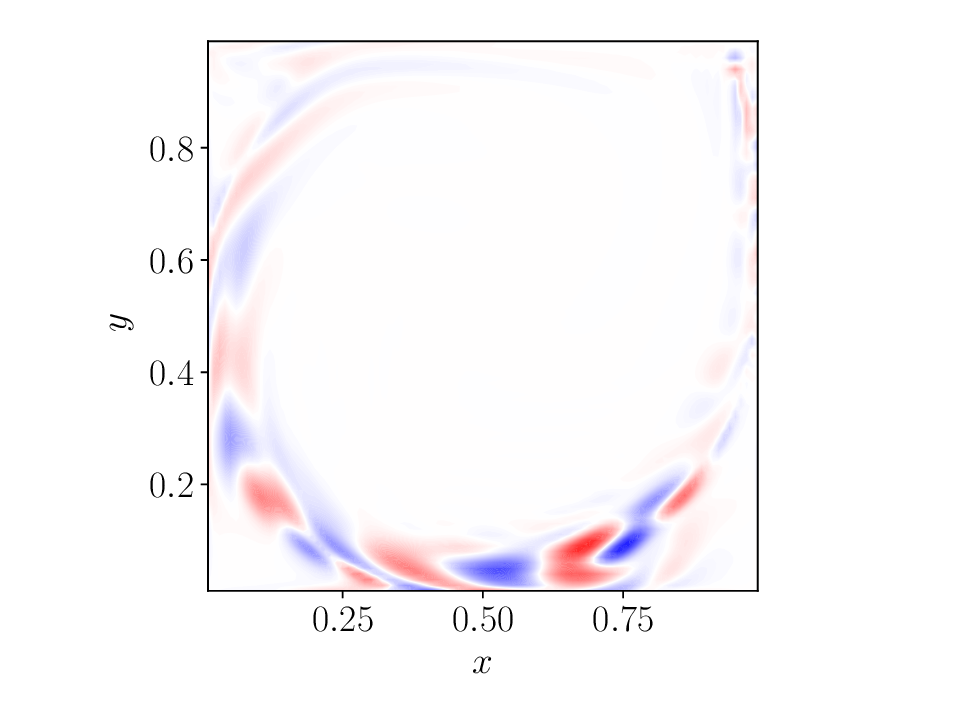}%[tsx/show help lines]
    \node at (0.55,0.55) {\small NiTROM};
    \end{tikzonimage}
}\\
\subfloat{
    \begin{tikzonimage}[trim= 45 15 80 6,clip,width=0.4\textwidth]{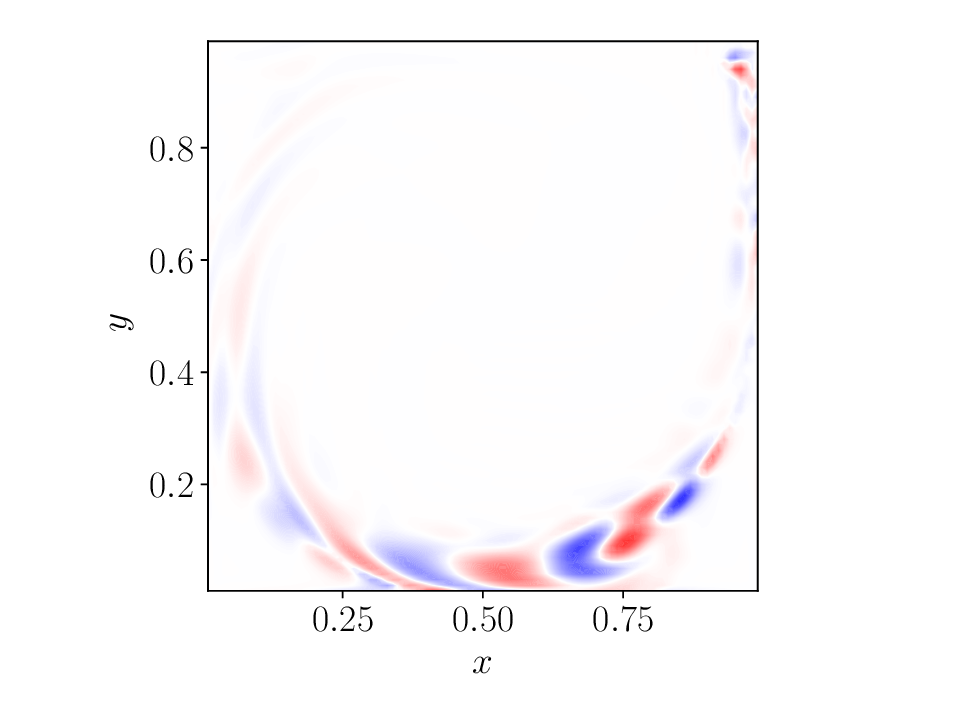}%[tsx/show help lines]
    \node at (0.55,0.55) {\small OpInf};
    \end{tikzonimage}
}
\hspace{4ex}
\subfloat{
    \begin{tikzonimage}[trim= 45 15 80 6,clip,width=0.4\textwidth]{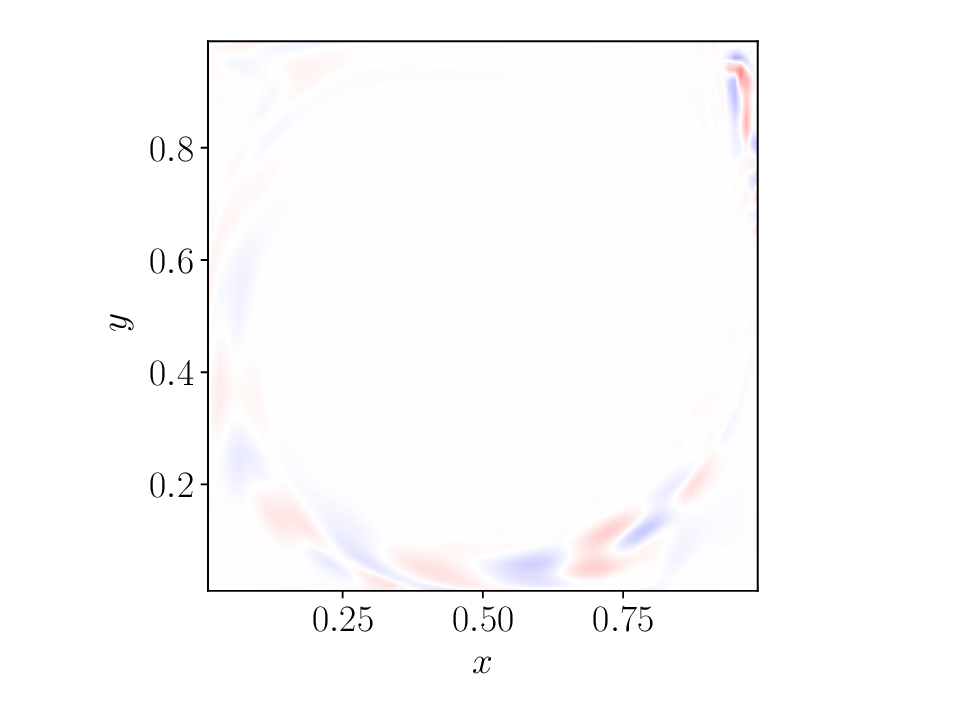}%[tsx/show help lines]
    \node at (0.55,0.55) {\small POD Gal.};
    \end{tikzonimage}
}
\caption{Vorticity field at time $t = 30$ from the trajectory in the bottom right panel of figure~\ref{fig:cavity_freq_response}. Red indicates positive vorticy with maximum value $0.75$, blue indicates negative vorticity with minimum value $-0.75$ and white is zero vorticity.}
\label{fig:cavity_snaps_2}
\end{figure}

\subsection{Testing}

In this section we compare NiTROM against Operator Inference and POD Galerkin.
We do not compare against TrOOP because of its intrusive need to access the linearized dynamics and the adjoint, and because we are ultimately interested in comparing our formulation against other non-intrusive (or weakly intrusive) model reduction techniques. 
We test the models by generating 25 impulse responses with the input $w$ drawn uniformly at random from $[-1,1]$. 
The training and testing errors for NiTROM, Operator Inference and for the POD-Galerkin model (all with dimension $r = 50$) are shown in figure \ref{fig:cavity_errors}.
The error is defined as
\begin{equation}
\label{eq:error_cavity}
    e(t) = \frac{1}{\sum_{i=0}^{N-1}\lVert \vq(t_i)\rVert^2} \sum_{i=0}^{N-1} \lVert \vq(t_i) - \hat{\vq}(t_i)\rVert^2,
\end{equation}
where $\vq$ is the ground-truth and $\hat{\vq}$ is the prediction given by the reduced-order model.
From the figure, we see that NiTROM maintains a low error across all trajectories and for all times. 
In particular, we observe that around $t = 5$ (when the fluid exhibits its peak in transient growth, as illustrated in figure \ref{fig:cavity}b) the errors produced by POD Galerkin and Operator Inference can be one to two orders of magnitude larger than those produced by NiTROM. 

As in the previous section, we also test the ability of our reduced-order model to predict the response of the fluid to sinusoidal inputs $w(t) = 0.1\sin(k\omega t)$ starting from the stable steady state. 
The results are shown in figure \ref{fig:cavity_freq_response}, where we see the response to harmonics of $\omega = 1.25$ and $\omega = 1$, which are frequencies that are naturally excited by the linear dynamics of the flow.
In all cases, NiTROM exhibits better predictive accuracy than the other models, and it is capable of tracking the early-stage sharp growth of the perturbations as well as the cavity's long-time oscillatory behavior. 
Finally, in order to gain further insight into the performance of these models, we show vorticity snapshots drawn from the two trajectories with frequency $4\omega$ at time $t = 30$.
In figure \ref{fig:cavity_snaps_1}, Operator Inference and POD Galerkin underestimate the magnitude of the vorticity field as well as the shape and phase of the vortical structures. 
In figure \ref{fig:cavity_snaps_2}, on the other hand, Operator Inference provides a reasonable approximation of the vorticity magnitude, but it misrepresents the phase of the vortical structures (notice the difference in colors at any given physical location). 
By contrast, NiTROM provides an accurate estimate of the vorticity phase and magnitude in both cases.

\section{Conclusion}

We have introduced a novel non-intrusive, purely data-driven framework to compute accurate reduced-order models of high-dimensional systems that exhibit non-normal behavior. 
In particular, given trajectories from the full-order system, we solve an optimization problem to simultaneously find optimal oblique projection operators and reduced-order dynamics. 
The framework is termed NiTROM---``Non-intrusive Trajectory-based optimization of Reduced-Order Models''---and demonstrated on three examples: a simple toy model governed by three ordinary differential equations, the complex Ginzburg-Landau equations and a two-dimensional incompressible lid-driven cavity flow at Reynolds number $Re = 8300$.
All the examples showcase the effectiveness of our models in giving accurate short- and long-time predictions even for operating regimes outside those in the training data. 
% The examples considered in the paper demonstrate the effectiveness of our method in learning low-order models capable of giving accurate predictions even for operating regimes outside those in the training data. 
In the future, it would be interesting to explore the possibility of extending our NiTROM formulation to quadratic manifolds, as done within the Operator Inference formulation in \cite{geelen2023}. 
% \bvremark{make note of comparison in formulation to Troop and OpInf?}

\section*{Acknowledgments}

This material is based upon work supported by the National Science Foundation under Grant No. 2139536, issued to the University of Illinois at Urbana-Champaign by the Texas Advanced Computing Center under subaward UTAUS-SUB00000545 with Dr. Daniel Stanzione as the PI.  
The computations for the lid-driven cavity flow were performed on TACC’s Frontera under LRAC grant CTS20006.

\appendix
\section{Proof of Proposition \ref{prop:gradient}}
\label{app:proof}
We begin by deriving the reduced-order adjoint equation \eqref{eq:adjoint_rom}.
The derivative of $L$ with respect to $\hat{\vz}$ in the direction of $\vxi$ is given by 
\begin{equation}
    \begin{aligned}
        D_{\hat\vz} L [\vxi] = \sum_{i=0}^{N-1}&\bigg\{-2\ve(t_i)^\intercal \mC(t_i)\mPhi \left(\mPsi^\intercal \mPhi\right)^{-1}\vxi(t_i) + \vlam_i^\intercal \vxi \bigg\vert_{t_0}^{t_i} - \int_{t_0}^{t_i}\left(\frac{d\vlam^\intercal_i}{dt} + \vlam_i^\intercal \left[\partial_{\hat\vz}\overline{\vf}_r(\hat\vz)\right]\right)\vxi dt\bigg. \\
        &\bigg. +\vlam_i(t_0)^\intercal \vxi(t_0)\bigg\},
    \end{aligned}
\end{equation}
where we have used integration by parts on the time-derivative term. 
For optimality, we require $D L[\vxi] = 0$ for all $\vxi$. 
Thus, for $i > 0$, the terms $\vlam_i(t_0)^\intercal \vxi(t_0)$ cancel out and we are left with equation \eqref{eq:adjoint_rom}. 
Similarly, when $i = 0$, the second and third terms in the sum vanish and we recover 
\begin{equation}
    \vlam_0(t_0) = 2 \minvT \mPhi^\intercal \mC(t_0)^\intercal \ve(t_0).
\end{equation}
We now derive the gradient of $L$ with respect to $\mPhi$. 
% First, recalling the ambient space metric in \eqref{eq:Phi_metric}, the gradient is defined as the ambient-space vector $\nabla_{\mPhi}L$ that satisfies
% \begin{equation}
% \label{eq:grad_defn}
%     D_{\mPhi} L [\vxi] = g^{\mathbb{R}_{*}^{n\times r}}_{\mPhi}\left(\nabla_{\mPhi}L,\vxi\right) 
% \end{equation}
% where $D_{\mPhi} L [\vxi]$ denotes the directional derivative. 
Using the Lagrangian \eqref{eq:lagrangian}, it is straightforward to verify that the derivative of $L$ with respect to $\mPhi$ in the direction of $\vxi$ is given by
\begin{equation}
    D_{\mPhi} L [\vxi] = -2\sum_{i=0}^{N-1} \ve(t_i)^\intercal \mC(t_i)\left(\vxi \minv - \mPhi\minv \mPsi^\intercal\vxi \minv\right)\hat\vz(t_i),
\end{equation}
where we have used the identity
\begin{equation}
    D_{\mPhi}\left(\mPsi^\intercal\mPhi\right)^{-1}[\vxi] = -\left(\mPsi^\intercal\mPhi\right)^{-1} \mPsi^\intercal \vxi \left(\mPsi^\intercal\mPhi\right)^{-1}.
\end{equation}
Recalling the definition of the gradient in the ambient-space metric \eqref{eq:grad_Phi}, and using $\mPhi^\intercal \mPhi = \mI$ as commonly done in computation so that \eqref{eq:Phi_metric} reduces to the trace, we recover the gradient in \eqref{eq:grad_Phi}.
The other gradients in Proposition \ref{prop:gradient} can be obtained in a similar fashion and the proof is concluded. 

\bibliographystyle{siamplain}
\bibliography{refs}

\end{document}